
\documentclass[[10pt,journal,compsoc]{IEEEtran}
%
% If IEEEtran.cls has not been installed into the LaTeX system files,
% manually specify the path to it like:
% \documentclass[journal]{../sty/IEEEtran}

% Some very useful LaTeX packages include:
% (uncomment the ones you want to load)

% *** MISC UTILITY PACKAGES ***
%
%\usepackage{ifpdf}
% Heiko Oberdiek's ifpdf.sty is very useful if you need conditional
% compilation based on whether the output is pdf or dvi.
% usage:
% \ifpdf
%   % pdf code
% \else
%   % dvi code
% \fi
% The latest version of ifpdf.sty can be obtained from:
% http://www.ctan.org/pkg/ifpdf
% Also, note that IEEEtran.cls V1.7 and later provides a builtin
% \ifCLASSINFOpdf conditional that works the same way.
% When switching from latex to pdflatex and vice-versa, the compiler may
% have to be run twice to clear warning/error messages.

% *** CITATION PACKAGES ***
%
\usepackage{cite}
\usepackage[numbers]{natbib}
% cite.sty was written by Donald Arseneau
% V1.6 and later of IEEEtran pre-defines the format of the cite.sty package
% \cite{} output to follow that of the IEEE. Loading the cite package will
% result in citation numbers being automatically sorted and properly
% "compressed/ranged". e.g., [1], [9], [2], [7], [5], [6] without using
% cite.sty will become [1], [2], [5]--[7], [9] using cite.sty. cite.sty's
% \cite will automatically add leading space, if needed. Use cite.sty's
% noadjust option (cite.sty V3.8 and later) if you want to turn this off
% such as if a citation ever needs to be enclosed in parenthesis.
% cite.sty is already installed on most LaTeX systems. Be sure and use
% version 5.0 (2009-03-20) and later if using hyperref.sty.
% The latest version can be obtained at:
% http://www.ctan.org/pkg/cite
% The documentation is contained in the cite.sty file itself.

% *** GRAPHICS RELATED PACKAGES ***
%
\ifCLASSINFOpdf
  \usepackage[pdftex]{graphicx}
  % declare the path(s) where your graphic files are
  % \graphicspath{{../pdf/}{../jpeg/}}
  % and their extensions so you won't have to specify these with
  % every instance of \includegraphics
  % \DeclareGraphicsExtensions{.pdf,.jpeg,.png}
\else
  % or other class option (dvipsone, dvipdf, if not using dvips). graphicx
  % will default to the driver specified in the system graphics.cfg if no
  % driver is specified.
  % \usepackage[dvips]{graphicx}
  % declare the path(s) where your graphic files are
  % \graphicspath{{../eps/}}
  % and their extensions so you won't have to specify these with
  % every instance of \includegraphics
  % \DeclareGraphicsExtensions{.eps}
\fi
% graphicx was written by David Carlisle and Sebastian Rahtz. It is
% required if you want graphics, photos, etc. graphicx.sty is already
% installed on most LaTeX systems. The latest version and documentation
% can be obtained at: 
% http://www.ctan.org/pkg/graphicx
% Another good source of documentation is "Using Imported Graphics in
% LaTeX2e" by Keith Reckdahl which can be found at:
% http://www.ctan.org/pkg/epslatex
%
% latex, and pdflatex in dvi mode, support graphics in encapsulated
% postscript (.eps) format. pdflatex in pdf mode supports graphics
% in .pdf, .jpeg, .png and .mps (metapost) formats. Users should ensure
% that all non-photo figures use a vector format (.eps, .pdf, .mps) and
% not a bitmapped formats (.jpeg, .png). The IEEE frowns on bitmapped formats
% which can result in "jaggedy"/blurry rendering of lines and letters as
% well as large increases in file sizes.
%
% You can find documentation about the pdfTeX application at:
% http://www.tug.org/applications/pdftex

% *** MATH PACKAGES ***
%
\usepackage{amsmath,amssymb,amsfonts,amsthm}
% A popular package from the American Mathematical Society that provides
% many useful and powerful commands for dealing with mathematics.
%
% Note that the amsmath package sets \interdisplaylinepenalty to 10000
% thus preventing page breaks from occurring within multiline equations. Use:
%\interdisplaylinepenalty=2500
% after loading amsmath to restore such page breaks as IEEEtran.cls normally
% does. amsmath.sty is already installed on most LaTeX systems. The latest
% version and documentation can be obtained at:
% http://www.ctan.org/pkg/amsmath

\usepackage{pifont}
\newcommand{\cmark}{\ding{51}}%
\newcommand{\xmark}{\ding{55}}%

% *** SPECIALIZED LIST PACKAGES ***
%
\usepackage{algorithm}
\usepackage{algpseudocode}
% algorithmic.sty was written by Peter Williams and Rogerio Brito.
% This package provides an algorithmic environment fo describing algorithms.
% You can use the algorithmic environment in-text or within a figure
% environment to provide for a floating algorithm. Do NOT use the algorithm
% floating environment provided by algorithm.sty (by the same authors) or
% algorithm2e.sty (by Christophe Fiorio) as the IEEE does not use dedicated
% algorithm float types and packages that provide these will not provide
% correct IEEE style captions. The latest version and documentation of
% algorithmic.sty can be obtained at:
% http://www.ctan.org/pkg/algorithms
% Also of interest may be the (relatively newer and more customizable)
% algorithmicx.sty package by Szasz Janos:
% http://www.ctan.org/pkg/algorithmicx

\newtheorem{assumption}{Assumption}[section]
\newtheorem{theorem}{Theorem}[section]
\newtheorem{lemma}[theorem]{Lemma}

\newtheorem{corollary}{Corollary}[theorem]

\newcommand{\algorithmicdefinition}{\textbf{Definitions:}}
\newcommand{\Definition}{\item[\algorithmicdefinition]}
\newcommand{\algorithmicdefta}{\textbf{DeFTA():}}
\newcommand{\DeFTA}{\item[\algorithmicdefta]}

\newcommand{\Blank}{\textbf{}}
\newcommand{\algorithmiconconnect}{\textbf{OnConnect($j$):}}
\newcommand{\OnConnect}{\item[\algorithmiconconnect]}
\newcommand{\algorithmiconreceive}{\textbf{OnReceive($|\mathcal{D}_j|, d_j, w_j$):}}
\newcommand{\OnReceive}{\item[\algorithmiconreceive]}

\newcommand{\algorithmicvarphi}{\textbf{$\varphi(\{ |\mathcal{D}_j|, d_j, \tilde{w}_j | j \in \mathcal{S}_i^t\})$:}}
\newcommand{\funcvarphi}{\item[\algorithmicvarphi]}

\newcommand{\algorithmicphi}{\textbf{$\phi(\mathbf{c}_{i}^t, \tilde{w}_i^t)$:}}
\newcommand{\funcphi}{\item[\algorithmicphi]}

% *** ALIGNMENT PACKAGES ***
%
%\usepackage{array}
% Frank Mittelbach's and David Carlisle's array.sty patches and improves
% the standard LaTeX2e array and tabular environments to provide better
% appearance and additional user controls. As the default LaTeX2e table
% generation code is lacking to the point of almost being broken with
% respect to the quality of the end results, all users are strongly
% advised to use an enhanced (at the very least that provided by array.sty)
% set of table tools. array.sty is already installed on most systems. The
% latest version and documentation can be obtained at:
% http://www.ctan.org/pkg/array

% IEEEtran contains the IEEEeqnarray family of commands that can be used to
% generate multiline equations as well as matrices, tables, etc., of high
% quality.

% *** SUBFIGURE PACKAGES ***
\ifCLASSOPTIONcompsoc
 \usepackage[caption=false,font=normalsize,labelfont=sf,textfont=sf]{subfig}
\else
 \usepackage[caption=false,font=footnotesize]{subfig}
\fi

\usepackage{stfloats}
\usepackage{url}
% url.sty was written by Donald Arseneau. It provides better support for
% handling and breaking URLs. url.sty is already installed on most LaTeX
% systems. The latest version and documentation can be obtained at:
% http://www.ctan.org/pkg/url
% Basically, \url{my_url_here}.

% *** Do not adjust lengths that control margins, column widths, etc. ***
% *** Do not use packages that alter fonts (such as pslatex).         ***
% There should be no need to do such things with IEEEtran.cls V1.6 and later.
% (Unless specifically asked to do so by the journal or conference you plan
% to submit to, of course. )

\usepackage{caption}
\usepackage{tabularx}
\usepackage{multirow}
\usepackage{booktabs}
\usepackage{mathtools}
\usepackage{color,soul}

% correct bad hyphenation here
\hyphenation{op-tical net-works semi-conduc-tor}

\begin{document}
%
% paper title
% Titles are generally capitalized except for words such as a, an, and, as,
% at, but, by, for, in, nor, of, on, or, the, to and up, which are usually
% not capitalized unless they are the first or last word of the title.
% Linebreaks \\ can be used within to get better formatting as desired.
% Do not put math or special symbols in the title.
\title{DeFTA: A Plug-and-Play Decentralized Replacement for FedAvg}
%
%
% author names and IEEE memberships
% note positions of commas and nonbreaking spaces ( ~ ) LaTeX will not break
% a structure at a ~ so this keeps an author's name from being broken across
% two lines.
% use \thanks{} to gain access to the first footnote area
% a separate \thanks must be used for each paragraph as LaTeX2e's \thanks
% was not built to handle multiple paragraphs
%

\author{Yuhao~Zhou,~Minjia~Shi,~Yuxin~Tian,~Qing~Ye,~and~Jiancheng Lv,~\IEEEmembership{Member,~IEEE}% <-this % stops a space
\thanks{This work is supported in part by the National Key Research and Development Program of China under Contract 2017YFB1002201, in part by the National Natural Science Fund for Distinguished Young Scholar under Grant 61625204, and in part by the State Key Program of the National Science Foundation of China under Grant 61836006.}% <-this % stops a space
\thanks{Yuhao~Zhou~(e-mail:~sooptq@gmail.com), Minjia~Shi~(e-mail:~3101ihs@gmail.com), Yuxin~Tian~(e-mail:~cs.yuxintian@outlook.com), Qing~Ye~(e-mail:~fuyeking@stu.scu.edu.cn), and Jiancheng~Lv~(e-mail:~lvjiancheng@scu.edu.cn) are with the College of Computer Science, Sichuan University, China.}% <-this % stops a space
}

% note the % following the last \IEEEmembership and also \thanks - 
% these prevent an unwanted space from occurring between the last author name
% and the end of the author line. i.e., if you had this:
% 
% \author{....lastname \thanks{...} \thanks{...} }
%                     ^------------^------------^----Do not want these spaces!
%
% a space would be appended to the last name and could cause every name on that
% line to be shifted left slightly. This is one of those "LaTeX things". For
% instance, "\textbf{A} \textbf{B}" will typeset as "A B" not "AB". To get
% "AB" then you have to do: "\textbf{A}\textbf{B}"
% \thanks is no different in this regard, so shield the last } of each \thanks
% that ends a line with a % and do not let a space in before the next \thanks.
% Spaces after \IEEEmembership other than the last one are OK (and needed) as
% you are supposed to have spaces between the names. For what it is worth,
% this is a minor point as most people would not even notice if the said evil
% space somehow managed to creep in.

% The paper headers
\markboth{Journal of \LaTeX\ Class Files,~Vol.~14, No.~8, August~2015}%
{Zhou \MakeLowercase{\textit{et al.}}: Communication-Efficient Federated Learning with Compensated Overlap-FedAvg}
% The only time the second header will appear is for the odd numbered pages
% after the title page when using the twoside option.
% 
% *** Note that you probably will NOT want to include the author's ***
% *** name in the headers of peer review papers.                   ***
% You can use \ifCLASSOPTIONpeerreview for conditional compilation here if
% you desire.

% If you want to put a publisher's ID mark on the page you can do it like
% this:
%\IEEEpubid{0000--0000/00\$00.00~\copyright~2015 IEEE}
% Remember, if you use this you must call \IEEEpubidadjcol in the second
% column for its text to clear the IEEEpubid mark.

% use for special paper notices
%\IEEEspecialpapernotice{(Invited Paper)}

\IEEEtitleabstractindextext{%
\begin{abstract}
Federated learning (FL) is identified as a crucial enabler for large-scale distributed machine learning (ML) without the need for local raw dataset sharing, substantially reducing privacy concerns and alleviating the isolated data problem. In reality, the prosperity of FL is largely due to a centralized framework called FedAvg~\cite{mcmahan2016federated}, in which workers are in charge of model training and servers are in control of model aggregation. However, FedAvg's centralized worker-server architecture has raised new concerns, be it the low scalability of the cluster, the risk of data leakage, and the failure or even defection of the central server. To overcome these problems, we propose \textbf{De}centralized \textbf{F}ederated \textbf{T}rusted \textbf{A}veraging (DeFTA), a decentralized FL framework that serves as \textit{a plug-and-play replacement for FedAvg}, instantly bringing better security, scalability, and fault-tolerance to the federated learning process after installation. In principle, it fundamentally resolves the above-mentioned issues from an architectural perspective without compromises or tradeoffs, primarily consisting of a new model aggregating formula with theoretical performance analysis, and a decentralized trust system (DTS) to greatly improve system robustness. Note that since DeFTA is an alternative to FedAvg at the framework level, \textit{prevalent algorithms published for FedAvg can be also utilized in DeFTA with ease}. Extensive experiments on six datasets and six basic models suggest that DeFTA not only has comparable performance with FedAvg in a more realistic setting, but also achieves great resilience even when \textit{66\%} of workers are malicious. Furthermore, we also present an asynchronous variant of DeFTA to endow it with more powerful usability. 
\end{abstract}

% Note that keywords are not normally used for peerreview papers.
\begin{IEEEkeywords}
distributed computing, federated learning, decentralization, framework.
\end{IEEEkeywords}}

% make the title area
\maketitle

% As a general rule, do not put math, special symbols or citations
% in the abstract or keywords.

% Note that keywords are not normally used for peerreview papers.

% For peer review papers, you can put extra information on the cover
% page as needed:
% \ifCLASSOPTIONpeerreview
% \begin{center} \bfseries EDICS Category: 3-BBND \end{center}
% \fi
%
% For peerreview papers, this IEEEtran command inserts a page break and
% creates the second title. It will be ignored for other modes.
\IEEEpeerreviewmaketitle

\section{Introduction}
% The very first letter is a 2 line initial drop letter followed
% by the rest of the first word in caps.
% 
% form to use if the first word consists of a single letter:
% \IEEEPARstart{A}{demo} file is ....
% 
% form to use if you need the single drop letter followed by
% normal text (unknown if ever used by the IEEE):
% \IEEEPARstart{A}{}demo file is ....
% 
% Some journals put the first two words in caps:
% \IEEEPARstart{T}{his demo} file is ....
% 
% Here we have the typical use of a "T" for an initial drop letter
% and "HIS" in caps to complete the first word.
\IEEEPARstart{F}ederated learning (FL) was proposed in \cite{mcmahan2017communication, mcmahan2016federated, li2018federated, zhao2018federated} to mainly tackle isolated data islands problem, allowing centralized-data-free distributed ML, and having been applied in industrial fields to guard data privacy, \textit{e.g.,} keyboard prediction~\cite{hard2018federated}, disease diagnose~\cite{brisimi2018federated}, and vehicular communications~\cite{samarakoon2019distributed}. Until now, the most commonly used FL framework FedAvg~\cite{mcmahan2016federated} utilizes classical worker-server architecture, where dedicated central servers are employed to orchestrate the entire training process, asking workers to upload their trained local models instead of local raw data. Yet, as the usage scenarios of FL expand, several weaknesses of FedAvg are gradually exposed, be it massive communication load with vast workers~\cite{konevcny2016federated}, potential failure of central servers~\cite{kairouz2019advances}, or even possible dataset leakage from exchanged local models~\cite{zhu2020deep}. These weaknesses contribute to the hindering of the wider adoption of FedAvg.

As a matter of fact, various algorithms were proposed to alleviate the above-mentioned issues case by case. \cite{zhou2021communication} was proposed to apply overlapping training with data compensation to the FL process, efficiently dispersing the communication pressure with minimal model performance decrease. \cite{bonawitz2019towards} built a scalable production system for FL to improve system reliability, including complicated architecture design and data pipeline. \cite{mcmahan2017learning, bonawitz2017practical} introduced Differential Privacy (DP) and Secure Aggregation (SecAgg) respectively to prevent central servers from restoring workers' local raw dataset, but still the former approach affects the model performance~\cite{bagdasaryan2019differential}, and the latter algorithm considerably aggravated the communication load of the entire cluster~\cite{bonawitz2019federated}, making FL difficult to scale up. As it can be seen, these innovative methods are essentially making tradeoffs in different aspects (\textit{i.e.,} model performance, system reliability, communication load, etc.), addressing some issues while introducing some others. In fact, people are realizing that although the worker-server architecture was considered the most popular at that time (\textit{i.e.,} when FedAvg was firstly proposed), a majority of common problems in practical FL is also occurred by this architecture as well~\cite{kairouz2019advances} (\textit{i.e.} unbalanced communication load, low fault-tolerance, etc.), and thereby blocking FL's further developments. Thus, a more secure and robust FL framework needs to be came up with as a replacement for FedAvg.

Consequently, several works~\cite{roy2019braintorrent, hu2019decentralized, lalitha2019peer, kim2019blockchained} had replaced the default worker-server architecture in centralized FL in a peer-to-peer (p2p) manner~\cite{saroiu2003measuring} for better scalability and ownership~\cite{kairouz2019advances}. However, these methods just laser-focus on one of the issues mentioned above as well, and even raise new problems, \textit{i.e.,} performance degradation~\cite{roy2019braintorrent, hu2019decentralized} and impracticality~\cite{kim2019blockchained, roy2019braintorrent}. In other words, a practical decentralized system is full-fill with sparse connected and untrustworthy peers, without a synchronized global clock, as is one of the main concerns of this paper. Hence, we pursue to practically scale centralized FL up to real-world decentralized architecture without performance compromise.

Motivated by this, we propose a decentralized FL framework with more practicability compared to previous proposals, named \textbf{De}centralized \textbf{F}ederated \textbf{T}rusted \textbf{A}veraging (DeFTA), aiming to address both unresolved problems in the centralized FL (\textit{i.e.,} massive communication load~\cite{konevcny2016federated}, failure of central servers~\cite{kairouz2019advances}, dataset leakage~\cite{zhu2020deep}) and new raising real-world challenges in decentralized FL (\textit{i.e.,} sparse connection, malicious peers, and asynchronism). In other words, \textit{DeFTA serves as a plug-and-play decentralized replacement for FedAvg for instantly improvements on model performance, data security, and system reliability, with high flexibility, minimal intrusion to the underlying implemented algorithms within the framework, and is fully compatible with all previous algorithms for FedAvg (\textit{i.e.,} DP~\cite{mcmahan2017learning}, SecAgg~\cite{bonawitz2017practical}), FedAdam~\cite{reddi2020adaptive}, etc.}. To be more specific, unlike directly using the ratio of dataset size as model aggregating weights, we novelly take workers' outdegrees in the connected graph into account to rectify the averaging bias caused by decentralized broadcasting, and provide the theoretical convergence analysis. Second, assuming all peers are unreliable by default, we introduce a decentralized trust system (DTS) which could isolate malicious actors for better fault tolerance and robustness. Furthermore, we also proposed an asynchronous variant of DeFTA, namely, AsyncDeFTA, to remove the barriers of synchronization. It could be one of the first attempts to deliver an integrated decentralized FL framework as a drop-in replacement for classical FedAvg for wildness deployment. Extensive experiments also demonstrate comparable performance with centralized FL and stability under malicious decentralized environments, validating the effectiveness of DeFTA.

The contributions of this proposal are the followings:

\begin{enumerate}
    \item We revise the previous model aggregating formula in a decentralized FL setting by taking workers' outdegrees into account. It successfully rectifies the averaging bias caused by decentralized broadcasting.
    \item We bring selfish assumptions (\textit{i.e.,} workers are assumed to be untrustworthy by default) to the decentralized FL setting, making the system robust to adversarial actors.
    \item We present an integrated decentralized FL framework that features a drop-in replacement for classical FedAvg for instant improvements on model performance, data security, system reliability, and more. It keeps full compatibility with previous algorithms proposed for FedAvg (\textit{i.e.,} DP~\cite{mcmahan2017learning}, SecAgg~\cite{bonawitz2017practical}), FedAdam~\cite{reddi2020adaptive}, etc.) so that they can be used combinedly for even greater enhancements. To our best knowledge, it could be the first attempt in the decentralized FL realm to satisfy these characteristics at once.
    \item We demonstrate extensive experiments conducted on six model structures and six datasets. Their results suggest marginal improvements over other decentralized FL attempts and strong system stability under harsh environments, validating the effectiveness of DeFTA.
\end{enumerate}

\section{Literature Review}
\subsection{Decentralized Optimization}
Decentralized optimization~\cite{tsitsiklis1986distributed} represents schemes that help the model be trained decentrally for better scalability and ownership~\cite{kairouz2019advances} compared to centralized optimization (\textit{e.g.,} Parameter Server~\cite{li2014scaling}). The overall workflow of decentralized optimization can be summaries as a two-step strategy, commonly known as Combine-Then-Adapt (CTA) Diffusion Strategy~\cite{sayed2014adaptation}: 1. each worker takes the weighted average of data sent from other workers that it connects to (\textit{i.e.,} its peers) as its new local data. 2. each worker updates its local data and sends it to its peers. By repeating the above 2 steps, the optimization of each worker's local data can be collaboratively achieved. However, it is obvious that weights for averaging in step 1 are crucial to the general optimizing performance, and its optimal value under decentralized FL settings is still an open problem for discussion.

In fact, the application of the decentralized optimization in ML can be tracked back to~\cite{tsitsiklis1984problems}, and then followed by an abundant of exceptional works for collaboratively model training~\cite{colin2016gossip, vanhaesebrouck2017decentralized, tang2018d, bellet2018personalized, koloskova2019decentralized, elgabli2020gadmm}. However, these works focus on the decentralization of distributed ML, and therefore are commonly suffered from high communication load or un-secured local datasets, making them fail to be directly applied to the FL realm, where both privacy, latency, and model performance are required at the same time. In our case, we focus on delivering a decentralized FL framework that best suits the needs of FL, addressing the above concerns from an architecture perspective. The general workflow of our framework follows the classical CTA strategy for flexibility so that it is very easy to extend its functionalities, with derived optimal weights for averaging under FL settings to guarantee high optimizing performance.

\subsection{Consensus Protocol}
Consensus protocols ensure operations in a distributed system are always stable~\cite{zhang2020analysis} even in the presence of a number of faulty processes (\textit{i.e.,} adversarial attack, offline workers, etc.). A common consensus protocol in distributed ML is byzantine fault tolerance~\cite{yin2018byzantine, yang2019bridge}, where a byzantine-resilient function filters all possible malicious data before data averaging. However, common byzantine-resilient functions in ML~\cite{yang2019bridge, ghosh2019robust} filter data based on their numerical values (\textit{i.e.,} the largest and the smallest data are filtered) or distances (\textit{i.e.,} the data that have the largest Euclidean distance with the reference is filtered), this inevitably requires the number of byzantine workers (\textit{i.e.,} faulty/malicious worker) in a worker's peers cannot exceed a threshold. Moreover, Byzantine fault tolerance achieves agreements across all workers in the network~\cite{castro1999practical}, and therefore the consensus can only be maintained if the total number of byzantine workers is smaller than 66\%~\cite{sayeed2019assessing}.

\textit{Is it really necessary to maintain a global consensus in the decentralized FL system?} We rethink and conclude that in FL, workers' objectives can be independent of each other. That is, for any worker who participated in FL, its goal can be simply interpreted as enhancing the generalization of its own local model. In that way, each worker can actually just aggregate data from peers they trust, maintaining their own consensus. Consequently, our proposal achieves high model performance where 66\% of workers are malicious\footnote{Due to resource limitation, the number of malicious actors can only be set to a maximum of 66\% of the number of workers in the system.}, without any requirements to the quality of peers.

% Table generated by Excel2LaTeX from sheet 'Sheet1'
\begin{table*}[htbp]
  \centering
  \captionsetup{justification=centering}
  \caption{\\\textsc{The Comparison of representational FL frameworks.}}
  \resizebox{\linewidth}{!}{%
    \begin{tabular}{c|ccc|cccc|c}
    \toprule
    \multirow{2}[4]{*}{Training Framework} & \multicolumn{3}{c|}{Security~(~\xmark~is good)} & \multicolumn{4}{c|}{Performance~(~\cmark~is good)} & Others \\
\cmidrule{2-8}          & Servers Failure~\cite{bonawitz2019towards} & Poisoning~\cite{bagdasaryan2020backdoor} & Leakage~\cite{zhu2020deep} & Communication~\cite{konevcny2016federated} & Computation & Model ACC & Straggler~\cite{chen2016revisiting} & (~\xmark~is good) \\
    \midrule
    Single Machine &    -  &    -  &    -   &  -    & \cmark  & \cmark  &    -  & \xmark \\
    \midrule
    On-Site~\cite{mathew2021using} &   -   &    -  &    \xmark   &    -   &  \cmark     & \xmark      &  -     &  \xmark \\
    FedAvg~\cite{mcmahan2016federated} & \cmark  & \cmark  & \cmark    & \xmark     & \cmark      & \cmark   & \xmark  & \xmark \\
    \midrule
    \citet{roy2019braintorrent} &  \xmark  & \cmark  & \cmark  &    \xmark   &  \cmark   &  \cmark  & \xmark  & \xmark \\
    \citet{kim2019blockchained} &  \xmark  & \cmark  & \cmark  &    \cmark   &  \xmark   &  \cmark  & \xmark  & \cmark~(Blockchain) \\
    \citet{hu2019decentralized} &  \xmark  & \cmark  & \cmark  &    \cmark   &  \cmark   &  \cmark  & \xmark  & \xmark \\
    \midrule
    \textbf{DeFTA~(Ours)} &  \xmark  & \xmark  & \xmark  & \cmark   &  \cmark   &  \cmark  & \cmark  & \xmark \\
    \bottomrule
    \end{tabular}%
    }
  \label{tab:framework-comparsion}%
\end{table*}%

\subsection{Related Work}
To achieve privacy-preserving ML, Distributed On-site Learning~\cite{mathew2021using} was proposed to allow each worker to separately train its own model that fits its local dataset. However, since workers will not upload their trained models to others after training, models trained by this framework suffer from low generalization, easy over-fitting, and low accuracy~\cite{abdulrahman2020survey}. After the introduction of FL, highly successful methods~\cite{mcmahan2017communication, mcmahan2016federated, li2018federated, zhao2018federated} are quickly considered one of the most promising technical solutions to the isolated data islands problem, and have been applied in industrial fields~\cite{hard2018federated, brisimi2018federated, samarakoon2019distributed}. Such a centralized system takes the mainstream of FL but also raises new concerns~\cite{kairouz2019advances} about the entire client-server cluster, \textit{e.g.,} network overhead~\cite{konevcny2016federated}, single-point failure of the central server~\cite{kairouz2019advances}, trust issues~\cite{zhu2020deep, zhao2020idlg}, etc. Motivated by decentralized systems in ML~\cite{tsitsiklis1984problems, colin2016gossip}, some decentralized FL studies partially addressed the above-mentioned problems in a peer-to-peer manner~\cite{saroiu2003measuring} by making strong assumptions about either network topology or workers. For instance, recent studies~\cite{roy2019braintorrent, hu2019decentralized} extend the model aggregating algorithm from centralized FL to decentralized FL, assuming the topology is ideal and thereby ignoring the broadcasting procedure in the decentralized system. Moreover, blockchain alike method~\cite{kim2019blockchained} requires workers to store the whole history of models weights locally, and \cite{roy2019braintorrent} forces all workers to be densely connected, making them unpractical for real-world usage.

For illustrative comparison, we summarize 7 aspects for FL framework to consider, which falls into three categories: \textit{Security}, \textit{Performance} and \textit{Others} (Table~\ref{tab:framework-comparsion}). Specifically, \textit{Security} measures whether the given framework is vulnerable to specific attacks, \textit{Performance} measures whether the given framework is capable of efficiently training models, and \textit{Others} shows whether the given framework has other weaknesses. On the other side, in the provided training frameworks for comparison, \textit{Single Machine}, as opposed to distributed FL, means to train models locally, and is served as an ideal reference as it owns high performance without any security issues. Moreover, \textit{On-Site}~\cite{mathew2021using} and \textit{FedAvg}~\cite{mcmahan2016federated} are considered classical centralized privacy-preserving frameworks that are widely used nowadays. Furthermore, \cite{roy2019braintorrent, kim2019blockchained, hu2019decentralized} are considered three representational decentralized FL frameworks. Finally, DeFTA represents our proposal in this paper.

As it can be seen, differ from mentioned decentralized FL frameworks, we consider a more realistic scenario with unreliable peers, privacy-preserving communications, arbitrary network topology, and asynchrony. In other words, In a more realistic scenario, our proposed DeFTA performs comparably with centralized FL and achieves more robust results under malicious environments, while previous decentralized FL methods lose effectiveness.

\begin{figure}[htb]
	\centering
	\includegraphics[width=0.6\linewidth]{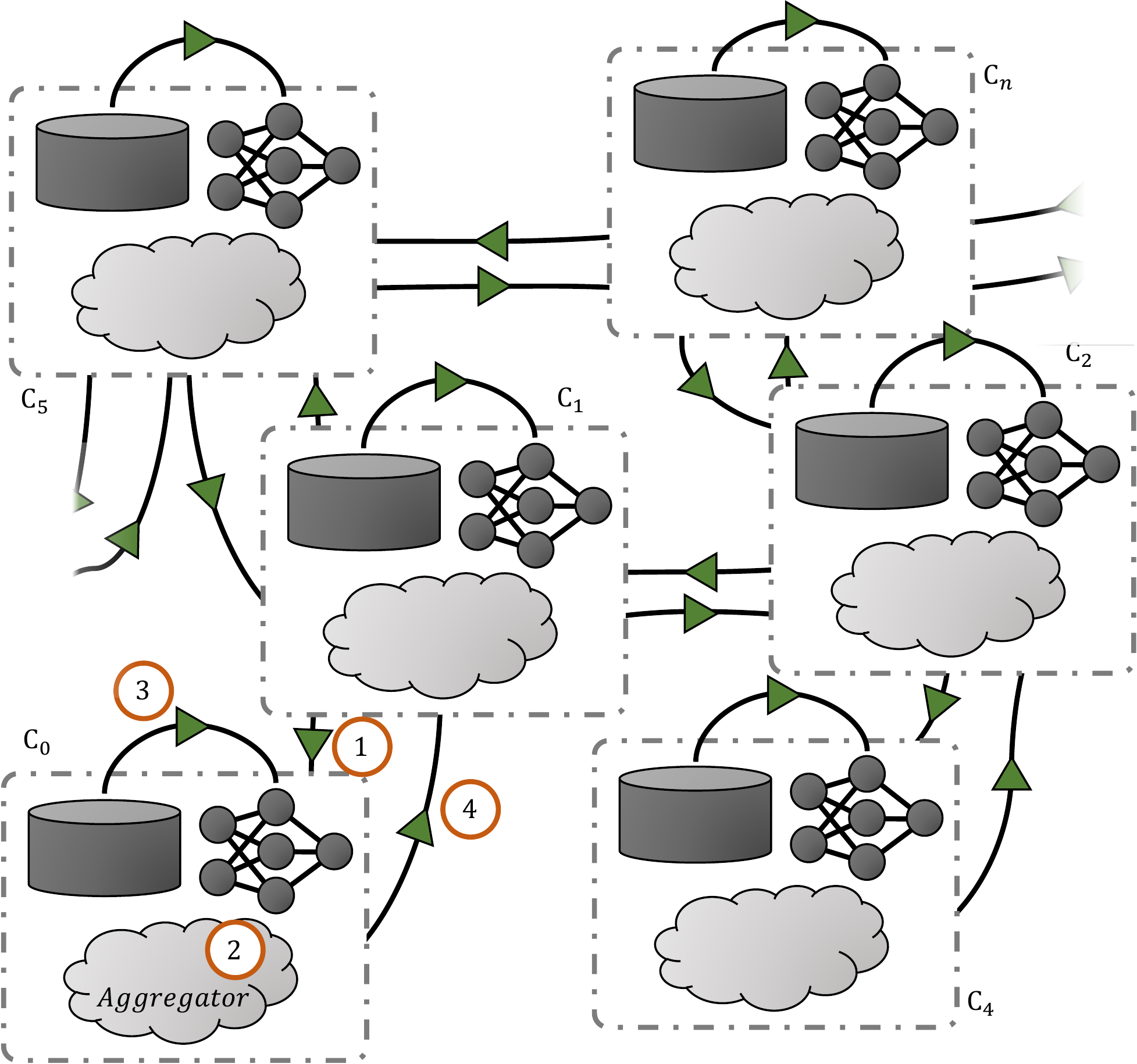}
	\caption{The workflow of our proposed DeFTA, where each of the workers is composed of a private local dataset, a local model, and a local aggregator. Workflow: 1. each worker samples a subset of peers to receive their models, 2. each worker aggregates peers' models using out-degrees related weights, and produces a new local model 3. each worker trains its new local model using its local dataset. 4. each worker sends its trained local model to all its peers.}
	\label{fig:defta-workflow}
\end{figure}

\section{Methodology}
\label{sec:methodology}

\subsection{Overview}

In DeFTA, workers are connected in a P2P manner as a graph, where vertices denote workers and edges denote P2P connections. DeFTA features three primary properties:

\textbf{Performance}: Figure~\ref{fig:defta-workflow} shows each worker in DeFTA will only directly communicate with its $1$-hop peers, significantly reducing the communication load during training. Moreover, out-degrees related weights are used in model aggregation to produce unbiased global model compared to centralized FL (\textit{i.e.,}~\cite{mcmahan2017communication}).

\textbf{Trustlessness}: Real-world FL applications are often vulnerable to malicious actors in the training process~\cite{kairouz2019advances}. In consequence, properly identifying and eliminating them is seen as one of the most challenging tasks in FL. In DeFTA, thanks to the sparse connection brought by decentralized architecture, instead of establishing a global consensus(\textit{i.e.,} byzantine fault tolerance~\cite{yin2018byzantine, yang2019bridge}), DTS is proposed to tackle this problem by assuming workers are all selfish, and will only receive models that could benefit themselves. In other words, each worker can be seen as a gateway in the data paths from one of its peers to other workers, with the power to shut down these data paths. Consequently, malicious actors can be effectively separated from the main network if all their peers drop the connection from them (Figure~\ref{fig:trust-system-overview}), hence achieving the consensus.

\begin{figure}[htp]
	\centering
	\includegraphics[width=0.7\linewidth]{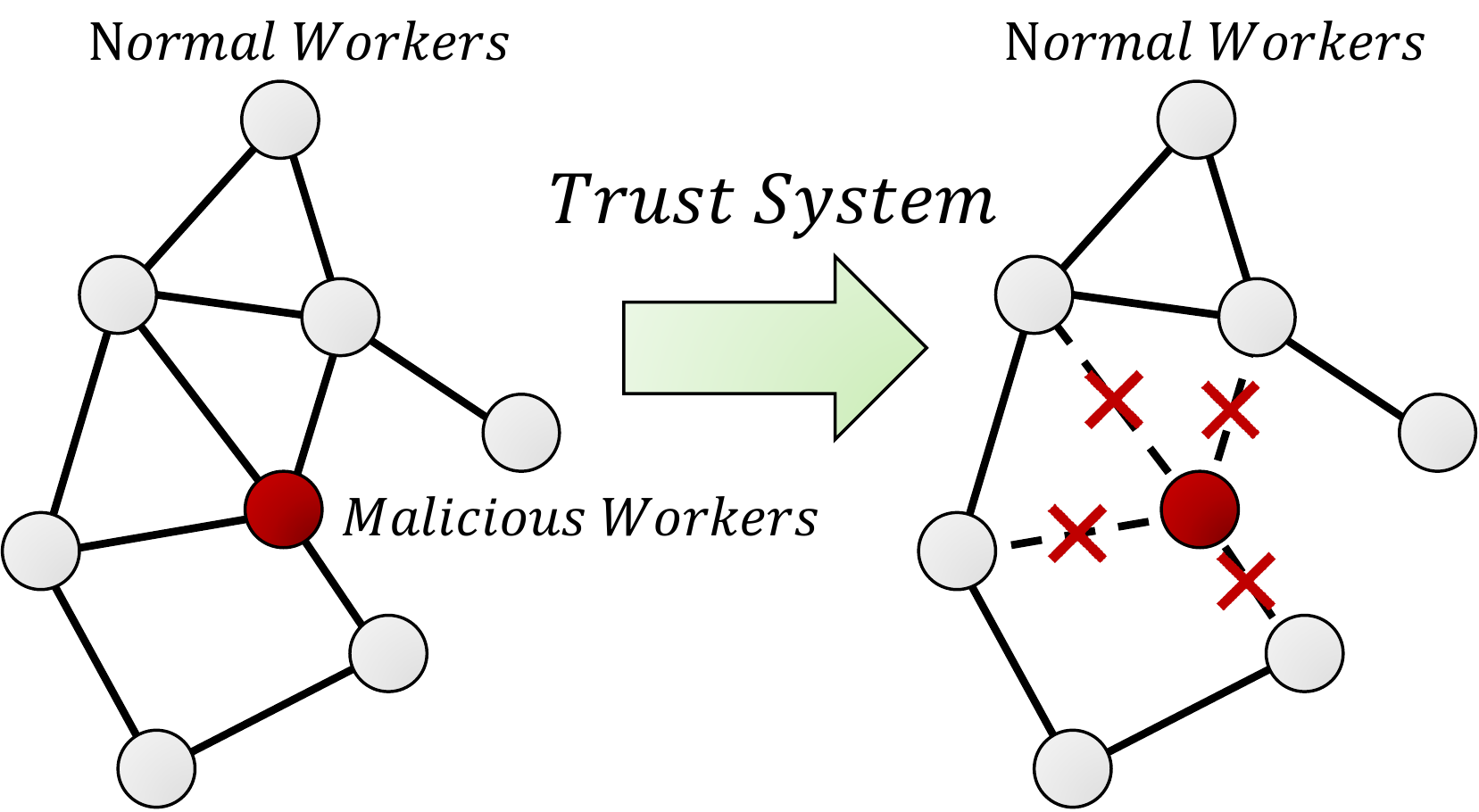}
	\caption{Each worker in the network is constantly evaluating all its peers' sent models, and will shut down connections from peers that they think are faulty. Consequently, DTS in DeFTA can automatically detect malicious actors and separate them from the main network.}
	\label{fig:trust-system-overview}
\end{figure}

\textbf{Asynchronism}: In a practical decentralized system, each worker is only aware of its peers, natively lacking a global clock for the cluster to synchronize, and therefore contributing additional communication overhead to achieve synchronization. Thus, asynchronism is generally required to maximize the utilization and performance of decentralized FL. In our case, we achieve asynchronism by maintaining a synchronized sub-FL-system asynchronously, where a sub-FL-system means any worker with its peers. That is, it is still synchronized within each sub-FL-system, but is asynchronous between different sub-FL-systems. Such a designation is efficient, easy to implement, and holds maximum compatibility for previous algorithms, thus requiring minimal efforts for them to migrate to our framework.

\begin{algorithm}[htb] 
	\caption{DeFTA}
	\label{alg:defta-overview} 
	\begin{algorithmic} 
	    \Definition
            \State $Connect_{n_j}(i)$: invoking $n_j$'s $OnConnect(\cdot)$.
    	    \State $Send(\cdot)$: invoking $n_j$'s $OnReceive(\cdot)$.
    	    \State $\varphi(\cdot)$: model aggregating function.
    	    \State $\phi(\cdot)$: DTS algorithm.
	    \DeFTA 
	    \For {each worker $n_i$ parallel}
	        \For {each worker $n_j \in \mathcal{N}_i$}
	           \State $|\mathcal{D}_j|, d_j, \tilde{w}_j = Connect_{n_j}(i)$
	        \EndFor
	        \State Initialize local model $w_i^0$
	        \State Initialize $\mathcal{S}_i^0 = \mathcal{N}_i$
	        \State Initialize confidence scores $\mathbf{c}_{i}^0$
	        \For {each global iteration $t \in {1, 2, ...}$}
	            \State Aggregating: $w_i^t = \varphi(\{ |\mathcal{D}_j|, d_j, \tilde{w}_j | n_j \in \mathcal{S}_i^t\})$
	            \State Local optimizing: $\tilde{w}_i^t = w_i^t - \eta_i^t \nabla \mathcal{L}_i(w_i^t)$
	            \State Update confidence scores: $\mathbf{c}_{i}^{t+1}, \mathbf{\theta}_{i}^{t+1} = \phi(\mathbf{c}_{i}^{t}, \tilde{w}_i^t)$
	            \State Sample peers: $\mathcal{S}_i^{t+1} = WeightedSample(\mathcal{W}_i^t, \theta_i^{t})$
	            \State Send local data: $Send(\{|\mathcal{D}_i|, d_i, \tilde{w}_i^t, j | n_j \in \mathcal{S}_i^t\})$
	            \State $WaitUntilAllPeersInEpoch(t, \mathcal{W}_i)$
	        \EndFor
	    \EndFor
	    \Blank
	    \OnConnect
	        \State Send necessary information: $Send(|\mathcal{D}_i|, d_i, \tilde{w}_i^t, j)$
	    \Blank
	    \OnReceive
	        \State Update local information: $|\mathcal{D}_j|, d_j, \tilde{w}_j = |\mathcal{D}_j|, d_j, w_j$
	\end{algorithmic} 
\end{algorithm}

Without loss of generality, we firstly suppose that $\mathcal{N}=\{n_i\}$ denotes the set of all participating workers. Then, $\forall$ worker $n_i$, $\mathcal{N}_i$ is its peers (\textit{i.e.}, its neighbors in the graph), $\mathcal{D}_i$ is its local dataset, $d_i$ is its outdegree in the network topology, $\mathcal{L}_i(\cdot)$ is its local loss function, and $\eta_i$ is its learning rate at epoch $t$. Meanwhile, for any connection between $n_i$ and $n_j$ at epoch $t$, $c_{i \rightarrow j}^t$ and $\theta_{i \rightarrow j}^t$ are the confidence score and sample weight from $n_i$ to $n_j$ respectively. Finally, $\mathcal{S}_i^t$ is $n_i$'s sampled peers using $\theta_{i \rightarrow j}^t$, which will be explained in Section~\ref{subsec:trust-system}. Hence, the pseudo-code of our proposed DeFTA is illustrated in Algorithm~\ref{alg:defta-overview}, where $\mathbf{c}_{i}^{t} = [c_{i \rightarrow j}^{t}]^{\top} \in \mathbb{R}^{|\mathcal{N}_i|}$, $\mathbf{\theta}_{i}^{t} = [\theta_{i \rightarrow j}^{t}]^{\top} \in \mathbb{R}^{|\mathcal{N}_i|}$ for convenience and $j \in \mathcal{N}_i$.

\subsection{Model Aggregating Formula}
In FL, each worker optimizes its own model by default, and occasionally receives other workers' models to establish a global view, and thereby preventing over-fitting. Specifically, DeFTA seeks to optimize the following objective:
\begin{equation}
    \mathbf{min}_{w} \sum_{i}^{\mathcal{N}} \sum_{j}^{\mathcal{N}_i} p_{i,j} \mathcal{L}_j(w_j)
    \label{eq:dfl-main-problem}
\end{equation}
Here, $p_{i,j}$ equals to $\frac{\frac{|\mathcal{D}_j|}{d_j}}{\sum_k^{\mathcal{N}_i} \frac{|\mathcal{D}_k|}{d_k}}$ instead of $\frac{|\mathcal{D}_j|}{\sum_k^{\mathcal{N}_i}|\mathcal{D}_k|}$, where the latter one repeated results in repeated aggregation of workers' local models from a global perspective, and is commonly used in previous decentralized FL works. Compared to it, we adopt a normalized factor that simply divides each worker's dataset size with its outdegree for revision. To formally validate the intuition, some mild assumptions need to be made as follows.
\begin{assumption}
\label{assump:mild-assumption}
    For any worker $i$ in the network, its local dataset size and outdegree are strictly larger than $0$ and independent from other workers, and obey $|\mathcal{D}_i| \sim \binom{n_\mathcal{D}}{p_\mathcal{D}}$ and $d_i \sim \binom{n_d}{p_d}$.
\end{assumption}
Note that workers' outdegrees can be independent because connections are directional.

\begin{assumption}
\label{assump:omega}
	Suppose $\Omega^t \in \mathbb{R}^{|\mathcal{N}| \times |\mathcal{N}|}$ is a square matrix whose entry $\omega_{i,j}^t$ intuitively represents the proportion of $w_j^0$ to $w_i^t$. Initially, $\Omega^0 = E$.
\end{assumption}
Consequently Algorithm~\ref{alg:defta-overview} can be casted into a Markov Transition Matrix for transforming $\Omega^{t-1}$ into $\Omega^t$, and thus different $\Omega^t$ can be considered as different states in a Markov Decision Process (MDP), formulating as:
\begin{equation}
     \Omega^{t+1} = P \Omega^{t}\text{,}
     \label{eq:mdp-formula}
\end{equation}
where $P \in \mathbb{R}^{|\mathcal{N}| \times |\mathcal{N}|}$ is a irreducible and ergodic primitive stochastic matrix and is formed by:
\begin{equation}
P_{ij}^t=\left\{
\begin{aligned}
    p_{i,j} & ~,~j \in \mathcal{N}_i \\
    0 ~~~~~ & ~,~j \notin \mathcal{N}_i \\
\end{aligned}
\right.\text{,}
\end{equation}
Then, we can extend Assumption~\ref{assump:mild-assumption} and Assumption~\ref{assump:omega} to Lemma~\ref{lemma:mild} and Lemma~\ref{lemma:1}.
\begin{lemma}
\label{lemma:mild}
    Since both $|\mathcal{D}_i|$ and $d_i$ are strictly larger than $0$, we can make estimations thanks to Taylor expansion for the moments of functions of random variables.
    \begin{enumerate}
        \item $\sum_j^{\mathcal{N}_i}|\mathcal{D}_j| \sim \binom{d_i n_{\mathcal{D}}}{p_{\mathcal{D}}}$.
        \item $\mathbb{E}\frac{1}{|\mathcal{D}_i|} \approx \frac{1}{n_{\mathcal{D}}p_{\mathcal{D}}}$, $\mathbb{E}\frac{1}{d_i} \approx \frac{1}{n_d p_d}$, and $\mathbb{E}\frac{|\mathcal{D_i}|}{d_i} \approx \frac{n_{\mathcal{D}}p_{\mathcal{D}}}{n_d p_d}$.
    \end{enumerate}
\end{lemma}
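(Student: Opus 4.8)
The plan is to treat the two claims separately, using only Assumption~\ref{assump:mild-assumption} together with standard facts about the binomial family and the delta method for moments. For the first claim I would condition on the outdegree $d_i$, so that $\sum_j^{\mathcal{N}_i}|\mathcal{D}_j|$ is a sum of $d_i$ independent copies of $|\mathcal{D}_j|\sim\binom{n_{\mathcal{D}}}{p_{\mathcal{D}}}$ (reading $\binom{n}{p}$ as the binomial law with $n$ trials and success probability $p$). The key fact is the additivity of the binomial family at a fixed success probability: if $X_1,\dots,X_{d_i}$ are independent with each $X_j\sim\mathrm{Binomial}(n_{\mathcal{D}},p_{\mathcal{D}})$, then their sum is $\mathrm{Binomial}(d_i n_{\mathcal{D}},p_{\mathcal{D}})$. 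I would prove this through probability-generating functions: each $X_j$ has p.g.f. $(1-p_{\mathcal{D}}+p_{\mathcal{D}}s)^{n_{\mathcal{D}}}$, independence turns the p.g.f. of the sum into the product $(1-p_{\mathcal{D}}+p_{\mathcal{D}}s)^{d_i n_{\mathcal{D}}}$, which is exactly the p.g.f. of $\binom{d_i n_{\mathcal{D}}}{p_{\mathcal{D}}}$, and a law is determined by its p.g.f. This yields claim~(1).

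For the second claim the tool is the second-order Taylor (delta-method) expansion of $\mathbb{E}[g(X)]$ about the mean $\mu=\mathbb{E}X$: from $g(X)=g(\mu)+g'(\mu)(X-\mu)+\tfrac12 g''(\mu)(X-\mu)^2+\cdots$ the linear term vanishes in expectation, leaving $\mathbb{E}[g(X)]\approx g(\mu)+\tfrac12 g''(\mu)\mathrm{Var}(X)$. Applying this to $g(x)=1/x$ with $X=|\mathcal{D}_i|$ and $\mu=n_{\mathcal{D}}p_{\mathcal{D}}$ gives the leading term $1/(n_{\mathcal{D}}p_{\mathcal{D}})$, and identically $\mathbb{E}\frac{1}{d_i}\approx 1/(n_d p_d)$ with $\mu=n_d p_d$. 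For the ratio I would first use the independence of $|\mathcal{D}_i|$ and $d_i$ granted by Assumption~\ref{assump:mild-assumption} to factor $\mathbb{E}\frac{|\mathcal{D}_i|}{d_i}=\mathbb{E}|\mathcal{D}_i|\cdot\mathbb{E}\frac{1}{d_i}$, then combine $\mathbb{E}|\mathcal{D}_i|=n_{\mathcal{D}}p_{\mathcal{D}}$ with the reciprocal estimate above to obtain $\frac{n_{\mathcal{D}}p_{\mathcal{D}}}{n_d p_d}$.

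The main obstacle is that $x\mapsto1/x$ is neither bounded nor defined at $0$, so the delta-method expansion is only an asymptotic approximation rather than an identity --- which is exactly why the statement uses $\approx$ and not $=$. I would justify its validity from the strictly-positive part of Assumption~\ref{assump:mild-assumption}, which removes the $1/0$ singularity, together with the concentration of a binomial with a reasonably large number of trials around its mean at small coefficient of variation $\sqrt{(1-p)/(np)}$; on this high-probability region $1/x$ is smooth and the neglected curvature term is of relative order $\mathrm{Var}(X)/\mu^2=(1-p)/(np)$, which is small. The honest way to present this is to state the first-order estimate as the claim and record the curvature term as a lower-order correction, so that no step silently assumes the approximation it is meant to justify.
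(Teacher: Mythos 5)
Your proposal is correct and matches the paper's approach: the paper states Lemma~\ref{lemma:mild} without a separate proof, justifying it only by the in-line appeal to binomial additivity and ``Taylor expansion for the moments of functions of random variables,'' which are exactly the two tools you flesh out (p.g.f.\ additivity of the binomial family for claim~(1), the second-order delta method plus the independence factorization $\mathbb{E}\frac{|\mathcal{D}_i|}{d_i}=\mathbb{E}|\mathcal{D}_i|\cdot\mathbb{E}\frac{1}{d_i}$ for claim~(2)). Your write-up is in fact more careful than the paper's, since you make explicit the conditioning on $d_i$, the role of strict positivity in avoiding the $1/0$ singularity, and the size $(1-p)/(np)$ of the neglected curvature term.
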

\begin{lemma}
\label{lemma:1}
	Since $P$ is irreducible and ergodic, by Ergodic Theorem, powers of $P$ converges to a stationary value $\pi$. Formally, $\lim_{n \rightarrow +\infty} (P)^n = \pi$, where $\pi = (\pi_0, \pi_1, \dots, \pi_{|\mathcal{N}|-1})\in \mathbb{R}^{|\mathcal{N}| \times |\mathcal{N}|}$ and $rank_\pi = 1$. For simplicity, we let $\pi_i = C$ denotes $\pi_i = C\mathbf{1} \in \mathbb{R}^{|\mathcal{N}| \times 1}$ where $C$ is a constant scalar.
\end{lemma}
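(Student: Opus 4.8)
The plan is to read Lemma~\ref{lemma:1} as a restatement of the Perron--Frobenius theorem for primitive stochastic matrices, so that almost all of the work consists in verifying the hypotheses and then assembling the limit. First I would observe that $P$ is row-stochastic by construction: for each $i$, summing the weights $p_{i,j}=\frac{|\mathcal{D}_j|/d_j}{\sum_{k}^{\mathcal{N}_i}|\mathcal{D}_k|/d_k}$ over $j\in\mathcal{N}_i$ yields $1$, while all remaining entries vanish. Consequently the all-ones vector $\mathbf{1}\in\mathbb{R}^{|\mathcal{N}|}$ obeys $P\mathbf{1}=\mathbf{1}$, so $1$ is an eigenvalue of $P$ with right eigenvector $\mathbf{1}$, and since $P$ is stochastic its spectral radius equals $1$.

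Next I would invoke the structural part of Perron--Frobenius, which is precisely what the lemma cites as the Ergodic Theorem. Because $P$ is irreducible and primitive, the Perron root $1$ is a \emph{simple} eigenvalue and every other eigenvalue $\lambda$ satisfies $|\lambda|<1$; moreover there is a unique positive left eigenvector $v^\top$ with $v^\top P=v^\top$, normalized by $v^\top\mathbf{1}=1$, namely the stationary distribution. I would then form the rank-one matrix $\Pi=\mathbf{1}v^\top$ and verify the algebraic identities $\Pi^2=\Pi$, $P\Pi=\mathbf{1}v^\top=\Pi$, and $\Pi P=\mathbf{1}v^\top=\Pi$, using $P\mathbf{1}=\mathbf{1}$ and $v^\top P=v^\top$. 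Setting $N:=P-\Pi$, these identities give $\Pi N=N\Pi=0$, whence an easy induction yields $P^{n}=\Pi+N^{n}$ for all $n$.

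It then remains to control $N^{n}$. Splitting $\mathbb{R}^{|\mathcal{N}|}=\operatorname{span}\{\mathbf{1}\}\oplus V_0$ with $V_0=\{x:\,v^\top x=0\}$, I would note that $N$ acts as $0$ on $\operatorname{span}\{\mathbf{1}\}$ and coincides with $P$ on the $P$-invariant subspace $V_0$, where the spectral radius is $\max_{\lambda\neq 1}|\lambda|<1$. Hence $\rho(N)<1$, so $N^{n}\to 0$ geometrically and $\lim_{n\to\infty}P^{n}=\Pi=\mathbf{1}v^\top=:\pi$. Since $\mathbf{1}\neq 0$ and $v\neq 0$, the outer product has $\operatorname{rank}(\pi)=1$, and its $i$-th column equals the constant vector $v_i\mathbf{1}$, which justifies the shorthand $\pi_i=C$ with $C=v_i$.

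The only genuine subtlety, and the step I would treat most carefully, is the convergence $P^{n}\to\mathbf{1}v^\top$ when $P$ need not be diagonalizable: a naive eigenvalue argument does not suffice because Jordan blocks attached to the subdominant eigenvalues could a priori grow polynomially. This is resolved cleanly by the fact that $1$ is a \emph{simple} eigenvalue, so its spectral projector is exactly $\Pi=\mathbf{1}v^\top$, while on $V_0$ every Jordan block carries an eigenvalue of modulus strictly below $1$, for which $n^{k}|\lambda|^{n}\to 0$. Thus $N^{n}\to 0$ irrespective of the Jordan structure, and the stated limit together with the rank-one property follows.
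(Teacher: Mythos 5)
Your proposal is correct, and it does more than the paper does: the paper offers no proof of Lemma~\ref{lemma:1} at all --- the lemma is asserted as an immediate consequence of the Ergodic Theorem for irreducible, primitive stochastic matrices, with the citation serving as the entire argument. What you have written is the standard self-contained Perron--Frobenius proof of that cited fact: row-stochasticity (which indeed holds for the paper's $P$, since $\sum_{j \in \mathcal{N}_i} p_{i,j} = 1$ by construction) gives $P\mathbf{1}=\mathbf{1}$ and $\rho(P)=1$; primitivity gives simplicity of the Perron root and $|\lambda|<1$ for all other eigenvalues; the projector $\Pi=\mathbf{1}v^\top$ satisfies $P^n=\Pi+N^n$ with $\rho(N)<1$, so $P^n \to \mathbf{1}v^\top$, which is rank one with constant columns $v_i\mathbf{1}$. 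Two things your route buys beyond the paper's citation. First, you correctly flag and resolve the only real analytic subtlety (possible Jordan blocks at subdominant eigenvalues), which a naive diagonalization argument would miss. Second, your identification $\pi=\mathbf{1}v^\top$ disambiguates the paper's notation, which conflates $\pi$ as a rank-one matrix (in Lemma~\ref{lemma:1}) with $\pi$ as a row vector satisfying $\pi P=\pi$ (in the proof of Theorem~\ref{theorem:main-theorem}); your reading --- the rows of the limit all equal the stationary distribution $v^\top$, the columns are the constants $\pi_i = v_i\mathbf{1}$ --- is exactly the interpretation under which the paper's subsequent use of $\pi P = \pi$ with $v_i = |\mathcal{D}_i|/|\mathcal{D}|$ is coherent. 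The only cost of your approach is length; the paper's one-line invocation is defensible precisely because the statement is a textbook theorem, but as a blind reconstruction your proof is complete and fills a gap the paper leaves to the reference.
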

Thus, Theorem~\ref{theorem:main-theorem} can be derived.
\begin{theorem} \label{theorem:main-theorem}
    Let Assumption~\ref{assump:omega} holds, with Lemma~\ref{lemma:1}, Aggregations in Equation~\ref{eq:dfl-main-problem} can be normalized as long as $\sum_{i}^{\mathcal{N}_j} \mathbb{E} \frac{|\mathcal{D}_i|}{|\mathcal{D}_j|} p_{i,j}=1,~\forall j \in \mathcal{N}$, 
\end{theorem}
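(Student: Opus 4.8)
The plan is to read the claim through the Markov-chain reformulation already set up in Assumption~\ref{assump:omega} and Lemma~\ref{lemma:1}, and to reduce ``normalization'' to the stationarity equation of the transition matrix $P$. Since $\Omega^0 = E$ and $\Omega^{t+1} = P\Omega^t$, I would first observe $\Omega^t = P^t$, so by Lemma~\ref{lemma:1} we have $P^t \to \pi$ with $\mathrm{rank}_\pi = 1$; every row of $\pi$ is therefore the same left eigenvector (the stationary distribution), and its $j$-th column is the constant vector $C_j \mathbf{1}$. Reading off entries gives $\omega_{i,j}^{\infty} = C_j$ for every $i$, i.e. each worker's limiting model is the \emph{common} consensus $\sum_j C_j w_j^0$, a weighted average of the initial models with weights $C_j$. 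Because each row of the stochastic limit $\pi$ sums to one, $\sum_j C_j = 1$ automatically, so the substance of ``normalized'' is that these limiting weights match the centralized-FedAvg target $C_j = |\mathcal{D}_j| / \sum_k |\mathcal{D}_k|$: each initial model is represented in proportion to its dataset size rather than being distorted by topology-induced broadcasting bias.

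Next I would establish the implication by substituting the candidate $C_j \propto |\mathcal{D}_j|$ into the stationarity identity $C_j = \sum_i C_i P_{ij} = \sum_i C_i\, p_{i,j}$. Factoring out the common $1/\sum_k |\mathcal{D}_k|$ and dividing through by $|\mathcal{D}_j|$ collapses this to $\sum_i \frac{|\mathcal{D}_i|}{|\mathcal{D}_j|}\, p_{i,j} = 1$, which is exactly the stated hypothesis once the expectation over the random dataset sizes and degrees is inserted. Hence the condition is precisely the assertion that the dataset-proportional vector \emph{is} the stationary distribution of $P$; conversely, when it holds, uniqueness of the stationary distribution (guaranteed by the irreducibility and ergodicity assumed in Lemma~\ref{lemma:1}) forces $C_j \propto |\mathcal{D}_j|$, and the consensus $\sum_j C_j w_j^0$ is therefore normalized in the FedAvg sense. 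This part is essentially bookkeeping on the Perron/ergodic structure and is the clean half of the argument.

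Finally, to show the proposed weights $p_{i,j} = \frac{|\mathcal{D}_j|/d_j}{\sum_k^{\mathcal{N}_i} |\mathcal{D}_k|/d_k}$ actually meet the hypothesis, I would cancel $|\mathcal{D}_j|$ and write each summand as $\frac{1}{d_j}\,\mathbb{E}\frac{|\mathcal{D}_i|}{\sum_k^{\mathcal{N}_i} |\mathcal{D}_k|/d_k}$, then evaluate with Lemma~\ref{lemma:mild}: the ratio-of-moments estimate and the identification $|\mathcal{N}_i| = d_i$ (forced by part~1 of that lemma) give $\mathbb{E}\frac{|\mathcal{D}_i|}{\sum_k^{\mathcal{N}_i} |\mathcal{D}_k|/d_k} \approx \frac{n_{\mathcal{D}} p_{\mathcal{D}}}{d_i\, n_{\mathcal{D}} p_{\mathcal{D}} / (n_d p_d)} = \frac{n_d p_d}{d_i}$, whose expectation over $d_i$ is $\approx 1$ since $\mathbb{E}\frac{1}{d_i} \approx \frac{1}{n_d p_d}$; summing the $d_j$ terms that aggregate $n_j$'s model and dividing by $d_j$ then yields $1$. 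I expect this last step to be the main obstacle: the quantity is a ratio of correlated binomial sums, so the equality is only approximate and leans entirely on the Taylor/moment approximations of Lemma~\ref{lemma:mild} together with concentration of the degrees $d_i$ about $n_d p_d$. Treating $\mathbb{E}\frac{X}{Y} \approx \frac{\mathbb{E}X}{\mathbb{E}Y}$ as exact hides the error terms, and making those explicit (or justifying that they vanish in the relevant regime) is where the argument needs the most care.
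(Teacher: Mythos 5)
Your core argument (the first two paragraphs) is essentially the paper's own proof: substitute the FedAvg target $\pi_j = |\mathcal{D}_j|/|\mathcal{D}|$ into the stationarity equation $\pi P = \pi$, factor out $|\mathcal{D}_j|/|\mathcal{D}|$, and read off the condition $\sum_{i}^{\mathcal{N}_j} \frac{|\mathcal{D}_i|}{|\mathcal{D}_j|} p_{i,j}=1$; your explicit appeal to uniqueness of the stationary distribution (to get the direction ``condition $\Rightarrow$ normalized'') is a welcome tightening of a step the paper leaves implicit. Your third paragraph, verifying that the degree-normalized weights actually satisfy the condition, is not part of this theorem's proof in the paper but is precisely the content of Corollary~\ref{corollary:2}, so the approximation concerns you raise there pertain to that corollary rather than to this statement.
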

\begin{proof}
This theorem will be proven by reducing DeFTA's model aggregating formula to Fedavg~\cite{mcmahan2016federated}. In other words, we will prove $\lim_{t \rightarrow +\infty}\Omega^t = (\frac{|\mathcal{D}_0|}{|\mathcal{D}|}, \frac{|\mathcal{D}_1|}{|\mathcal{D}|}, \dots, \frac{|\mathcal{D}_{|\mathcal{N}|-1}|}{|\mathcal{D}|})$, where $|\mathcal{D}|=\sum_i^{\mathcal{N}}|\mathcal{D}_i|$

Noticed that Equation~\ref{eq:mdp-formula} can be expanded to Equation~\ref{eq:mdp-formula-3}.

\begin{equation}
    lim_{t \rightarrow +\infty}\Omega^{t+1} = (P)^t \Omega^0
\label{eq:mdp-formula-3}
\end{equation}

By Lemma~\ref{lemma:1}, it holds $\pi P = \pi$. Let $\pi = (\frac{|\mathcal{D}_0|}{|\mathcal{D}|}, \frac{|\mathcal{D}_1|}{|\mathcal{D}|}, \dots, \frac{|\mathcal{D}_{|\mathcal{N}|-1}|}{|\mathcal{D}|})$, we have Equation~\ref{eq:p-convergence}.

\begin{equation}
\begin{aligned}
    \pi_j &= \frac{|\mathcal{D}_j|}{|\mathcal{D}|} = \sum_{i}^{\mathcal{N}} \pi_i p_{i,j} = \sum_{i}^{\mathcal{N}_j} \pi_i p_{i,j} = \sum_{i}^{\mathcal{N}_j} \frac{|\mathcal{D}_i|}{|\mathcal{D}|} p_{i,j} \\
    &= \sum_{i}^{\mathcal{N}_j} \frac{|\mathcal{D}_j|}{|\mathcal{D}|} \frac{|\mathcal{D}_i|}{|\mathcal{D}_j|} p_{i,j} = \frac{|\mathcal{D}_j|}{|\mathcal{D}|} \sum_{i}^{\mathcal{N}_j} \frac{|\mathcal{D}_i|}{|\mathcal{D}_j|} p_{i,j}
\end{aligned}
\label{eq:p-convergence}
\end{equation}

Consequently, in order to satisfy Equation~\ref{eq:p-convergence}, $\sum_{i}^{\mathcal{N}_j} \frac{|\mathcal{D}_i|}{|\mathcal{D}_j|} p_{i,j}$ must be equal to $1$. Hence the proof is complete.
\end{proof}
\begin{corollary}
\label{corollary:1}
By Lemma~\ref{lemma:mild} and Theorem~\ref{theorem:main-theorem}, model aggregations will be biased if $p_{i,j} = \frac{|\mathcal{D}_j|}{\sum_k^{\mathcal{N}_i}|\mathcal{D}_k|}$.
\end{corollary}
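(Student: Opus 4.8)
The plan is to show that the naive weight violates the normalization criterion established in Theorem~\ref{theorem:main-theorem}, so that the Markov chain of Assumption~\ref{assump:omega} can no longer converge to the FedAvg weights $(\frac{|\mathcal{D}_0|}{|\mathcal{D}|}, \dots, \frac{|\mathcal{D}_{|\mathcal{N}|-1}|}{|\mathcal{D}|})$. First I would substitute $p_{i,j} = \frac{|\mathcal{D}_j|}{\sum_k^{\mathcal{N}_i}|\mathcal{D}_k|}$ directly into the quantity $\sum_i^{\mathcal{N}_j}\mathbb{E}\frac{|\mathcal{D}_i|}{|\mathcal{D}_j|}p_{i,j}$ that Theorem~\ref{theorem:main-theorem} forces to equal $1$. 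The factor $|\mathcal{D}_j|$ cancels immediately, collapsing the summand to $\frac{|\mathcal{D}_i|}{\sum_k^{\mathcal{N}_i}|\mathcal{D}_k|}$, so the criterion becomes $\sum_i^{\mathcal{N}_j}\mathbb{E}\frac{|\mathcal{D}_i|}{\sum_k^{\mathcal{N}_i}|\mathcal{D}_k|}=1$. Crucially, every dependence on the target worker $j$ has vanished from the summand, which is precisely the structural defect I want to expose.

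Next I would evaluate the typical summand with Lemma~\ref{lemma:mild}. By part~1, $\sum_k^{\mathcal{N}_i}|\mathcal{D}_k|\sim\binom{d_i n_{\mathcal{D}}}{p_{\mathcal{D}}}$, and since $|\mathcal{D}_i|$ is independent of its peers' sizes by Assumption~\ref{assump:mild-assumption}, the same Taylor-moment estimate used in part~2 yields $\mathbb{E}\frac{|\mathcal{D}_i|}{\sum_k^{\mathcal{N}_i}|\mathcal{D}_k|}\approx \frac{n_{\mathcal{D}}p_{\mathcal{D}}}{d_i\, n_{\mathcal{D}}p_{\mathcal{D}}}=\frac{1}{d_i}$. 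To keep the number of summands fixed I would condition on the out-degree $d_j$ of the worker under consideration, so that $|\mathcal{N}_j|=d_j$. Summing the $d_j$ contributions and applying $\mathbb{E}\frac{1}{d_i}\approx\frac{1}{n_d p_d}$ from Lemma~\ref{lemma:mild} then gives $\sum_i^{\mathcal{N}_j}\mathbb{E}\frac{|\mathcal{D}_i|}{\sum_k^{\mathcal{N}_i}|\mathcal{D}_k|}\approx \frac{d_j}{n_d p_d}$.

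Finally I would compare this estimate with the required value $1$. Because $\mathbb{E}\, d_j = n_d p_d$, the quantity $\frac{d_j}{n_d p_d}$ equals $1$ only when $j$ happens to carry exactly the mean out-degree; for any worker with $d_j > n_d p_d$ the sum exceeds $1$ (over-aggregation) and for $d_j < n_d p_d$ it falls below $1$ (under-aggregation). Hence the criterion of Theorem~\ref{theorem:main-theorem} fails for a generic $j$, the stationary vector ceases to be $\frac{|\mathcal{D}_j|}{|\mathcal{D}|}$, and the aggregation is biased. To sharpen the contrast I would, in a single line, redo the same computation with the corrected weight $p_{i,j}=\frac{|\mathcal{D}_j|/d_j}{\sum_k^{\mathcal{N}_i}|\mathcal{D}_k|/d_k}$ and note that it introduces an extra factor $\frac{1}{d_j}$ which exactly cancels the $d_j$ neighbor count, restoring the sum to $1$ for every $j$.

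The main obstacle is not the algebra but justifying the expectation-of-a-ratio approximations cleanly and handling the coupling between $d_j$ as the number of summands and as the quantity surfacing in the final answer; conditioning on $d_j$ and invoking the independence granted by Assumption~\ref{assump:mild-assumption} is what lets me treat the neighbor count as fixed and isolate the missing $\frac{1}{d_j}$ factor as the exact source of the bias.
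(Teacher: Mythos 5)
Your proposal is correct and follows essentially the same route as the paper's proof: substitute the naive weight into the normalization criterion of Theorem~\ref{theorem:main-theorem}, cancel the $|\mathcal{D}_j|$ factor, and apply the Taylor-moment estimates of Lemma~\ref{lemma:mild} to conclude the sum collapses to a ratio of out-degrees rather than $1$. The only difference is cosmetic: where the paper stops at $\frac{d_j}{d_i}\neq 1$ with $d_i$ left as a generic neighbor degree, you additionally average over $d_i$ via $\mathbb{E}\frac{1}{d_i}\approx\frac{1}{n_d p_d}$ to obtain $\frac{d_j}{n_d p_d}$, which fails to equal $1$ for the same structural reason.
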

\begin{proof}
$\sum_i^{\mathcal{N}_j} \mathbb{E} \frac{|\mathcal{D}_i|}{|\mathcal{D}_j|} p_{i,j}$ can be firstly reduced to Equation~\ref{eq:calofest-1}.
\begin{equation}
\begin{aligned}
    \sum_i^{\mathcal{N}_j} \mathbb{E} \frac{|\mathcal{D}_i|}{|\mathcal{D}_j|} p_{i,j} &= d_j \mathbb{E} \frac{|\mathcal{D}_i|}{|\mathcal{D}_j|} p_{i,j} = d_j \mathbb{E} \frac{|\mathcal{D}_i|}{\sum_{k}^{\mathcal{N}_i} |\mathcal{D}_k|} \\
    &= d_j \mathbb{E} |\mathcal{D}_i| \mathbb{E} \frac{1}{\sum_{k}^{\mathcal{N}_i} |\mathcal{D}_k|}
\end{aligned}
\label{eq:calofest-1}
\end{equation}
Based on Lemma~\ref{lemma:mild}, we have biased estimation:
\begin{equation}
    \sum_i^{\mathcal{N}_j} \mathbb{E} \frac{|\mathcal{D}_i|}{|\mathcal{D}_j|} p_{i,j} \approx \frac{d_j}{d_i} \neq 1
\end{equation}
\end{proof}
Here in Corollary~\ref{corollary:1}, we show that by utilizing the ratio of workers' local dataset sizes, model aggregations in decentralized FL will be biased (\textit{i.e.,} unequal to $1$) compared to centralized FL, which is mainly caused by variant $d$ under decentralized settings. To rectify it, we additionally normalize $p_{i,j}$ by $d$, which is illustrated in Corollary~\ref{corollary:2}.
\begin{corollary}
\label{corollary:2}
By Lemma~\ref{lemma:mild} and Theorem~\ref{theorem:main-theorem}, model aggregations will be unbiased if $p_{i,j} = \frac{\frac{|\mathcal{D}_j|}{d_j}}{\sum_k^{\mathcal{N}_i}\frac{|\mathcal{D}_k|}{d_k}}$.
\end{corollary}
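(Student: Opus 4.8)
The plan is to verify the normalization condition supplied by Theorem~\ref{theorem:main-theorem}, namely $\sum_i^{\mathcal{N}_j}\mathbb{E}\frac{|\mathcal{D}_i|}{|\mathcal{D}_j|}p_{i,j}=1$ for all $j$, now with the degree-normalized weight $p_{i,j}=\frac{|\mathcal{D}_j|/d_j}{\sum_k^{\mathcal{N}_i}|\mathcal{D}_k|/d_k}$, reusing the exact mechanics of Corollary~\ref{corollary:1}. First I would substitute this $p_{i,j}$ into the target sum and cancel the common $|\mathcal{D}_j|$ factor in numerator and weight, leaving $\sum_i^{\mathcal{N}_j}\mathbb{E}\frac{|\mathcal{D}_i|}{d_j\sum_k^{\mathcal{N}_i}|\mathcal{D}_k|/d_k}$.

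Then, exactly as in Corollary~\ref{corollary:1}, I would invoke independence across workers (Assumption~\ref{assump:mild-assumption}) to replace the sum over the $d_j$ peers of $j$ by $d_j$ copies of a single generic summand's expectation. The crucial bookkeeping step is that the factor $d_j$ from the number of summands cancels the $1/d_j$ carried by the new weight, so $d_j$ drops out entirely and we are left with $\mathbb{E}\frac{|\mathcal{D}_i|}{\sum_k^{\mathcal{N}_i}|\mathcal{D}_k|/d_k}$. This is precisely the structural difference from Corollary~\ref{corollary:1}, where no such cancellation occurred and a residual $d_j$ survived into the final estimate.

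Next I would apply the same mean-field factorization $\mathbb{E}[X/Y]\approx\mathbb{E}[X]\,\mathbb{E}[1/Y]$ used in Corollary~\ref{corollary:1}, together with the moment estimates of Lemma~\ref{lemma:mild}. The numerator gives $\mathbb{E}|\mathcal{D}_i|\approx n_{\mathcal{D}}p_{\mathcal{D}}$. For the denominator, each of the $d_i$ summands $|\mathcal{D}_k|/d_k$ has mean $\frac{n_{\mathcal{D}}p_{\mathcal{D}}}{n_d p_d}$ by Lemma~\ref{lemma:mild}; multiplying by the expected number of terms $\mathbb{E}[d_i]=n_d p_d$ (equivalently using $\mathbb{E}\frac{1}{d_i}\approx\frac{1}{n_d p_d}$ on the realized count) makes the spurious degree factor $n_d p_d$ cancel, so the denominator's mean collapses to $n_{\mathcal{D}}p_{\mathcal{D}}$. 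The two estimates then cancel and the whole product is $\approx 1$, which is exactly the normalization condition, in sharp contrast to the biased value $d_j/d_i$ obtained in Corollary~\ref{corollary:1}.

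I expect the main obstacle to be the careful handling of the random number of summands and its interaction with the $1/d_k$ weighting: it is precisely the telescoping $\mathbb{E}[d_i]\cdot\mathbb{E}\frac{|\mathcal{D}_k|}{d_k}=n_{\mathcal{D}}p_{\mathcal{D}}$ that annihilates the degree bias identified in Corollary~\ref{corollary:1}. I would therefore be explicit about where independence and the Taylor-expansion moment approximations of Lemma~\ref{lemma:mild} are used, since the argument delivers the normalization only in expectation and rests on these approximations rather than on exact equalities.
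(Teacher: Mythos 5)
Your proposal is correct and follows essentially the same route as the paper's own proof: substitute the degree-normalized $p_{i,j}$, cancel $|\mathcal{D}_j|$, use the identical-distribution argument so that the $d_j$ summands cancel the $1/d_j$ factor, and then apply the independence and Taylor-expansion moment estimates of Lemma~\ref{lemma:mild} to reduce the remaining expectation to $\mathbb{E}\frac{|\mathcal{D}_i|/d_i}{|\mathcal{D}_k|/d_k}\approx 1$. The only cosmetic difference is bookkeeping in the denominator (you average the sum via $\mathbb{E}[d_i]$ times the generic summand's mean, while the paper replaces the sum by $d_i$ copies of a generic term inside the expectation), which amounts to the same approximation.
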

\begin{proof}
$\sum_i^{\mathcal{N}_j} \mathbb{E} \frac{|\mathcal{D}_i|}{|\mathcal{D}_j|} p_{i,j}$ can be reduced to Equation~\ref{eq:calofest-3}.
\begin{equation}
    \sum_i^{\mathcal{N}_j} \mathbb{E} \frac{|\mathcal{D}_i|}{|\mathcal{D}_j|} p_{i,j} = \sum_{i}^{\mathcal{N}_j} \mathbb{E} \frac{\frac{|\mathcal{D}_i|}{d_j}}{\sum_{k}^{\mathcal{N}_i} \frac{|\mathcal{D}_k|}{d_k}} = \mathbb{E} \frac{\frac{|\mathcal{D}_i|}{d_i}}{\frac{|\mathcal{D}_k|}{d_k}}
\label{eq:calofest-3}
\end{equation}
Based on Lemma~\ref{lemma:mild}, we have unbiased estimation:
\begin{equation}
    \sum_i^{\mathcal{N}_j} \mathbb{E} \frac{|\mathcal{D}_i|}{|\mathcal{D}_j|} p_{i,j} \approx 1
\end{equation}
\end{proof}

Additionally, $\mathcal{N}_i$ in $p_{i,j}$ can be replaced by a sampled subset of $\mathcal{N}_i$, namely $\mathcal{S}_i$, without influence the estimation. Hence, $\varphi(\cdot)$ in Algorithm~\ref{alg:defta-overview} can be further expressed in Algorithm~\ref{alg:phi-function}.
\begin{algorithm}[htb] 
	\caption{$\varphi(\{ |\mathcal{D}_j|, d_j, \tilde{w}_j | j \in \mathcal{S}_i^t\})$}
	\label{alg:phi-function} 
	\begin{algorithmic}
	    \funcvarphi
    	    \State Model aggregating: $w_i^t = \sum_j^{\mathcal{S}_i^t} p_{i,j}^{t} \tilde{w}_j$
    	    \State \textbf{Return} $w_i^t$
	\end{algorithmic} 
\end{algorithm}

Finally, there are already works~\cite{li2019convergence} that comprehensively studied the convergence rate of centralized FL. Hence, the convergence rate of DeFTA will be given by reduction. Since $\lim_{t \rightarrow +\infty}\Omega^t$ will converge to $\pi$ in DeFTA as Theorem~\ref{theorem:main-theorem} suggested, it can be viewed as workers' local models will gradually approach the global model $w_g^t$ with communications with peers, similar to \cite{koloskova2020unified}.
\begin{assumption}
\label{assump:3}
    At each communication, for the $i$-th worker where $i \in range(0, \mathcal{N})$, there has
    \begin{equation}
        || w_{i}^{t+1} - w_g^t || = \beta^t || \tilde{w}_{i}^{t} - w_g^t  ||\text{,}
    \end{equation}
    where $w_g^t = \sum_{i}^{\mathcal{N}} \frac{|\mathcal{D}_i|}{\sum_{j}^{N} |\mathcal{D}_j|} \tilde{w}_{i}^t$ is equivalent to the global model in FedAvg, $\beta^t$ is a convergence factor only related to $P$.
\end{assumption}
Hence, the update of DeFTA can be described as:
\begin{equation}
\begin{aligned}
	& \tilde{w}_i^{t}=w_i^{t}-\eta_i^t \nabla \mathcal{L}(w_i^t) \\
	& w_i^{t+1}= \left\{
    \begin{aligned}
        \tilde{w}_i^{t}~~~~~~~~~~~~ & ~,~Local~training \\
        (1 - \beta^t) w_g^t + \beta \tilde{w}_i^{t}  & ~,~Communicating \\
    \end{aligned}
    \right.
\end{aligned}
\label{eq:model-optimizing-2}
\end{equation}
Thus, DeFTA's updating process can be reduced to centralized FL, where instead of directly marching to $w_g^t$, workers in DeFTA are only expected to move a partial distance in the direction of $w_g^t$. Consequently, the convergence rate of DeFTA is the same as centralized FL. Note that clearly the introduction of $\beta^t$ unavoidably causes larger variance for $w_i^{t+1}$ compared to centralized FL. However, this variance can be effectively reduced by decreasing either the sparseness of $P$ or the learning rate $\eta$.

\subsection{Decentralized Trust System}
\label{subsec:trust-system}
Existing decentralized FL approaches are designed for ideal networks without malicious agents, raising trust issues. Thus, we propose a novel pessimistic subsystem in the decentralized FL domain, namely \textit{Decentralized Trust System}, for trustless network communication in real-world applications. In DTS, the type of attack is not assumed, instead, it only focuses on the nature of any attack: \textit{model performance reduction}. Thus, DTS considers every worker a selfish learner that will only accept models that help its own model perform better. That is, each worker $w_i$ will continuously evaluate the effectiveness of models sent by its sampled peers $\mathcal{S}_i^t$, and thereby maintaining confidence scores $\mathbf{c}_{i}^t$ for its connections. Moreover, due to the diverse definitions of model performance in different contexts (\textit{e.g.,} lower perplexity in NLP, higher AUC in KT, higher mAP in CV, etc.), the evaluation metrics for models are not pre-defined in DeFTA for higher flexibility of the framework, and can be varied under different scenarios, be it training loss, local accuracy, or others. In this paper, the evaluation metric is the training loss. In other words, the confidence $ c_{i \rightarrow j}^t$ between worker $i$ and $j$ increases if the training loss of $w_i$ decreases, and vise-versa.

Naively taking all peers into the updating of $c_{i \rightarrow j}^t$ hinders the distinguishing of malicious and vanilla workers since their tendencies of confidence updating are identical. Thus, we adopt a sampling procedure to better locate the vulnerability by transforming $c_{i \rightarrow j}^t$ into sample weights $\theta_{i \rightarrow j}^t$. Such a transformation function needs to satisfy the following constraints: (1) discourage low confidence scores, \textit{i.e.,} unreliable peers are not likely to be sampled; (2) encourage long-term commitments, \textit{i.e.,} confidence scores for reliable workers can still be slowly increased; (3) treat vanilla peers equally, \textit{i.e.,} reliable workers roughly share the identical sample probability. Initially, the confidence scores of all workers are initialized to $0$, referring to a neutral status. Formally, by applying $cRELU(\cdot)$ and $softmax(\cdot)$ function to $c_{i \rightarrow j}^t$, the transformed sample weight $\theta_{i \rightarrow j}^t$ can be further obtained as following:
\begin{equation}
\theta_{i}^t = softmax(cRELU(c_{i}^t))\text{,}
\label{eq:sampling-weight}
\end{equation}
where $cRELU(\cdot)$ is described by Equation~\ref{sampling-weight-append}.
\begin{equation}
cRELU(x)=\left\{
\begin{aligned}
    x & ~,~x \leq 0 \\
    0.2 x & ~,~x > 0 \\
\end{aligned}
\right.
\label{sampling-weight-append}
\end{equation}
On the one hand, the peers with negative confidence scores are significantly penalized by assigning larger gradients with $cRELU(\cdot)$ (constraint 1). On the other hand, a nearly flat linear function in $cRELU(\cdot)$ slows down the increasing tendency of positive confidence (constraint 2), making them grow together to roughly remedy the positive feedback, \textit{i.e.,} the sample probability of a worker converges to $1$ fastly. Additionally, we adopt $softmax(\cdot)$ normalizer to roughly equalize the sample probabilities of positive confidences (constraint 3). That is, in a neighborhood that has $k$ malicious workers, DTS is expected to achieve $\mathbb{E}\theta_{i \rightarrow j}^t = \frac{1}{d_i - k}$. Consequently, series of $\theta_{i \rightarrow j}^t$ is convergent with respect to epoch $t$.

\textbf{Time Machine of DTS:} Moreover, sometimes malicious actors broadcast extremely dirty models and even aggregating for $1$ time will result in an un-trainable model, \textit{e.g.,} $+\infty$ or carefully constructed model weights. To mitigate this problem, a backup mechanism is embedded in DTS, where it will automatically backup the latest stable local model defined by the DTS's evaluation metrics. When the model becomes damaged, this mechanism will restore the model from the backup. Notably, the worker will train the local model one more time after recovering for compensation.

In detail, the persuade-code of DTS is illustrated in Algorithm~\ref{alg:gamma-function}, where $p_{i} = (p_{i,j}~|~\forall j \in \mathcal{N}_i)^{\top} \in \mathbb{R}^{|\mathcal{N}_i|}$, $m_{i}^t \in \mathbb{R}^{|\mathcal{N}_i|}$ is a $0$-$1$ matrix, and its $j$-th entry is equal to $1$ if and only if $j \in \mathcal{S}_{i}^t$.

\begin{algorithm}[htb] 
	\caption{$\phi(\mathbf{c}_{i}^t, \tilde{w}_i^t)$}
	\label{alg:gamma-function} 
	\begin{algorithmic}[1] 
	\funcphi
	    \If {$\tilde{w}_i^t$ is damaged}
	        \State Recovery: $\tilde{w}_i^t = w_{backup}$
	        \State Compensation: $\tilde{w}_i^t = \tilde{w}_i^t - \eta_i^t \nabla \mathcal{L}(\tilde{w}_i^t)$
	        \State $loss_{trust} = +\infty$
	    \Else
	        \If {$loss^t$ is lowest}
	            \State Make a backup: $w_{backup} = \tilde{w}_i^t$
	        \EndIf
	        \State $loss_{trust} = loss^t - loss_{last}$
	        \State $loss_{last} = loss^t$
        \EndIf
        \State Update confidence scores: $\mathbf{c}_{i}^{t+1} = \mathbf{c}_{i}^t - m_{i}^t \circ p_{i} loss_{trust}$
        \State Transform to sample weights: $\mathbf{\theta}_{i}^{t+1} = softmax(cRELU(\mathbf{c}_{i}^t))$
        \State \textbf{Return} $\mathbf{c}_{i}^{t+1}$, $\mathbf{\theta}_{i}^{t+1}$
	\end{algorithmic} 
\end{algorithm}

\subsection{Asynchrony}
Existing decentralized FL methods almost ignore the asynchronism in real-world applications, limiting scalability seriously. On the other hand, in conventional distributed model training, asynchronous runtime is seriously hindered by the stale gradients problem, leading to the global model degradation problem~\cite{zhang2015staleness, dutta2018slow}. In decentralized FL, since workers are always exchanging model parameters (which can be decomposed into base model parameters and gradients) instead of pure gradients, the stale gradients problem is actually not applied. That is, each worker only cares about delivering the best model it can train to its peers.

Consequently, AsyncDeFTA is the asynchronous version of DeFTA with all features delivered by DeFTA, and ensures no global model degradation. Specifically, to achieve asynchronism, AsyncDeFTA firstly separates workers into multiple sub-FL-systems, where one sub-FL-system has one central worker and all its peers. Note that this separation is fully decentralized, as every worker is a central worker while being a peer. Also, one worker can be included in different sub-FL-systems, and can be seen as different workers with the same local dataset and local model (\textit{i.e.,} clone). Then AsyncDeFTA allows workers in the same sub-FL-systems to be still synchronized, while making different sub-FL-systems to be fully asynchronous. Here being synchronized means each worker will only upload its model to the central worker in the respective sub-FL-system once between model aggregations. As the cost of maintaining synchronization within a sub-FL-system is light, this strategy not only cheaply achieves asynchronism, but also remain maximum compatibility and flexibility. Algorithm~\ref{alg:defta-overview} illustrating the pseudocode of AsyncDeFTA is simply Algorithm~\ref{alg:defta-overview} without $WaitUntilAllPeersInEpoch(\cdot)$, which further validates the simpleness of the implementation. Similarly, any worker can customize its own training strategies, be it hyper-parameters, peer selecting algorithms, etc.

\begin{figure}[htb]
    \centering
    \subfloat[]{%
        \includegraphics[width=\linewidth]{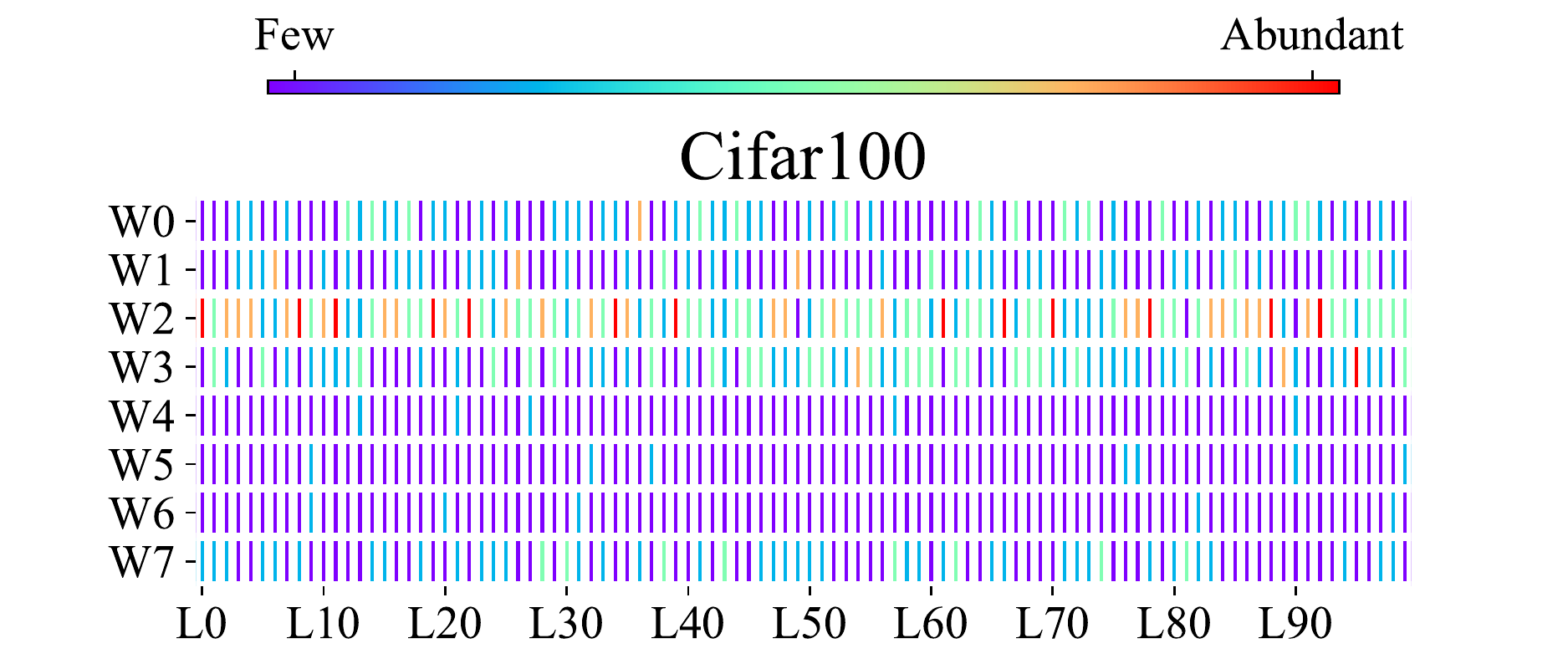}}
    \\
    % \hfill
    \subfloat[]{%
       \includegraphics[width=0.45\linewidth]{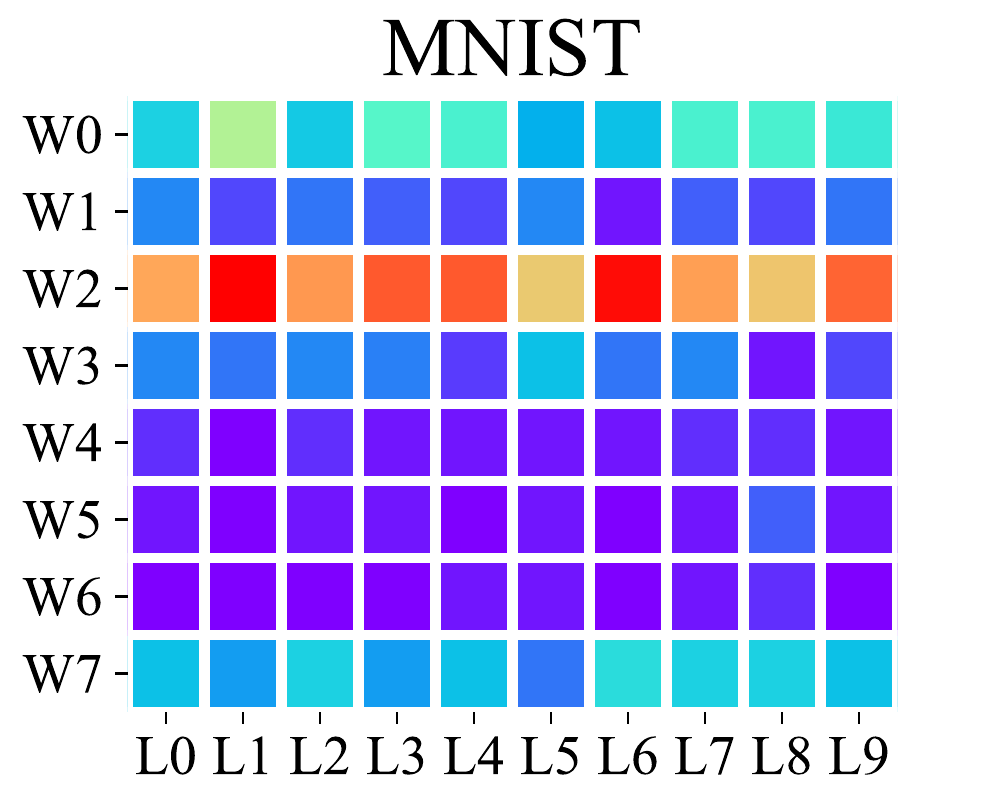}}
    % \hfill
    \subfloat[]{%
        \includegraphics[width=0.45\linewidth]{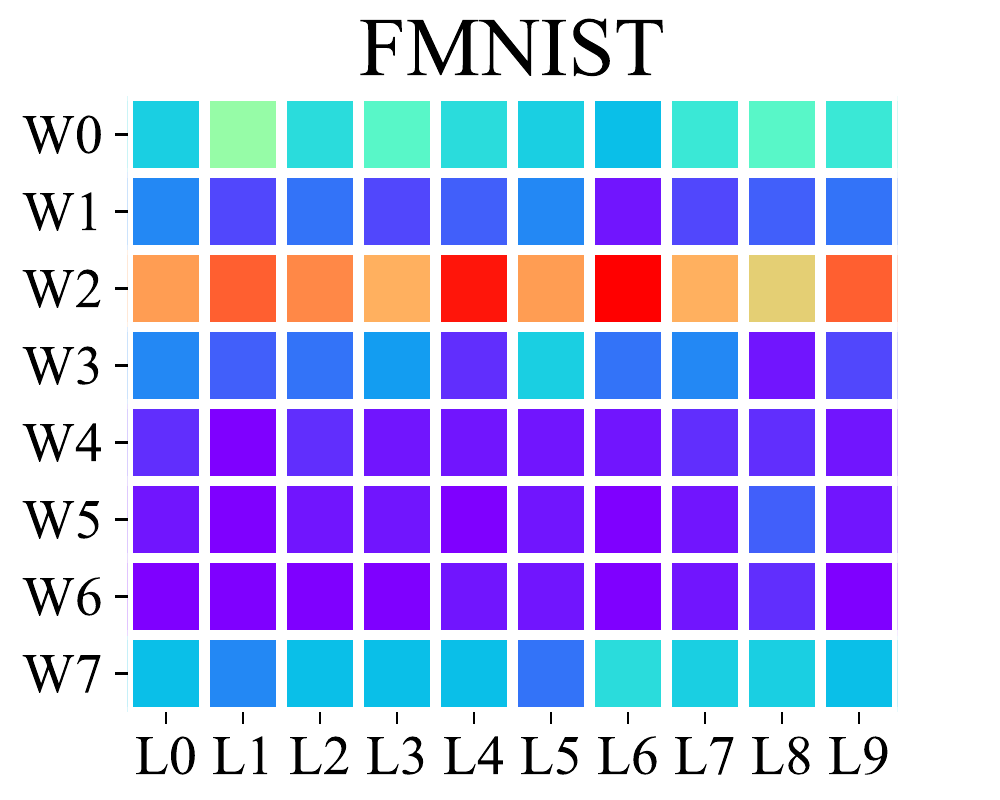}}
    % \hfill
    \\
    % \hfill
    \subfloat[]{%
        \includegraphics[width=0.45\linewidth]{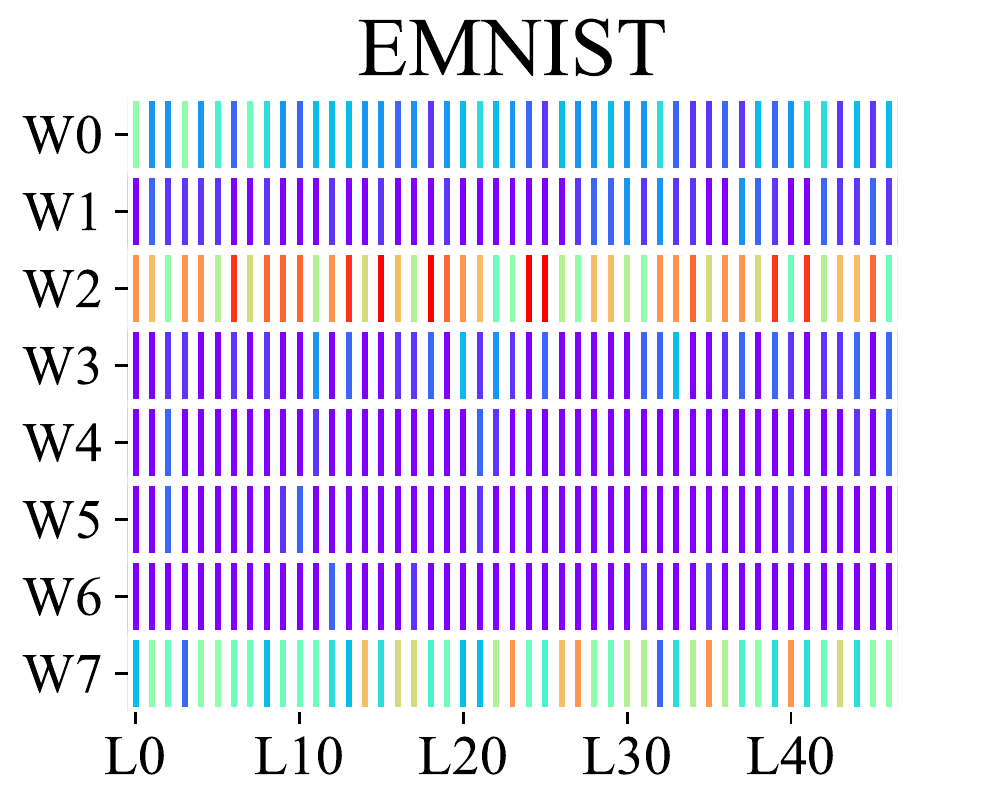}}
    % \hfill
    \subfloat[]{%
        \includegraphics[width=0.45\linewidth]{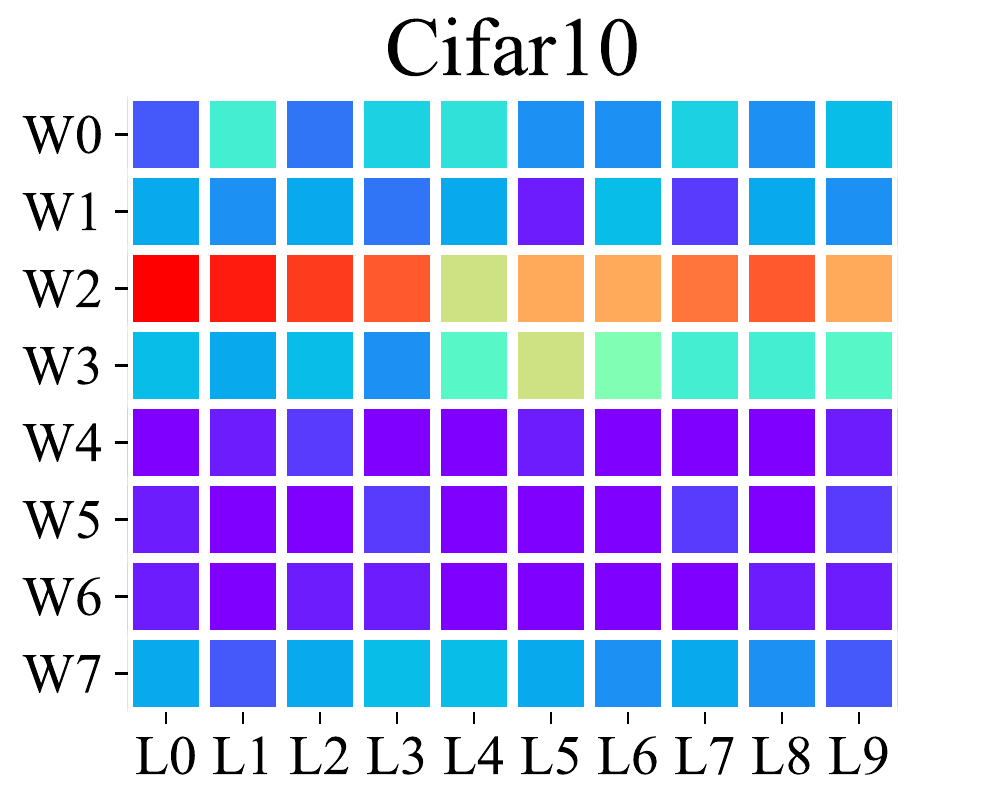}}
    % \hfill
    \caption{An illustration of non-i.i.d. Partitioning of multiple datasets for $8$ workers. $W0$ is referred to as worker $0$, $L0$ is referred to as the label of $0$ in the corresponding dataset, and the color of cells is referred to as the number of data it contains.}
    \label{fig:data-distribution} 
\end{figure}

\section{Experiments}
\label{sec:experiments}
% Table generated by Excel2LaTeX from sheet 'Model Performance'
\begin{table*}[htbp]
  \centering
  \caption{Accuracy/PPL comparison between centralized FL (CFL) and decentralized FL in $100$ epochs}
  \resizebox{\linewidth}{!}{%
    \begin{tabular}{c|cccc|cccc|cccc}
    \toprule
    \toprule
    \multirow{2}[4]{*}{N, Model, Dataset} & \multicolumn{4}{c|}{8 Workers} & \multicolumn{4}{c|}{14 Workers} & \multicolumn{4}{c}{20 Workers} \\
\cmidrule{2-13}          & CFL-F(\%)$\uparrow$   & CFL-S(\%)$\uparrow$   & DeFTA(\%)$\uparrow$ & DeFL(\%)$\uparrow$  & CFL-F(\%)$\uparrow$   & CFL-S(\%)$\uparrow$   & DeFTA(\%)$\uparrow$ & DeFL(\%)$\uparrow$  & CFL-F(\%)$\uparrow$   & CFL-S(\%)$\uparrow$   & DeFTA(\%)$\uparrow$ & DeFLA(\%)$\uparrow$ \\
    \midrule
    MLP,MNIST & 97.77 & 97.71 & \textbf{97.71$\mathbf{\pm}$0.1} & \textit{97.62$\pm$0.1} & 97.85 & 97.38 & \textbf{97.26$\mathbf{\pm}$0.9} & \textit{96.94$\pm$0.7} & 97.77 & 97.71 & \textbf{97.19$\mathbf{\pm}$0.5} & \textit{96.91$\pm$0.7} \\
    MnistNet,FMNIST & 90.83 & 90.52 & \textbf{90.43$\mathbf{\pm}$0.5} & \textit{89.95$\pm$0.8} & 90.54 & 86.30  & \textbf{88.16$\mathbf{\pm}$1.8} & \textit{87.65$\pm$2.7} & 90.56 & 90.21 & \textbf{88.28$\mathbf{\pm}$1.9} & \textit{87.71$\pm$2.0} \\
    MnistNet,EMNIST & 83.48 & 81.06 & \textbf{81.47$\mathbf{\pm}$2.6} & \textit{80.24$\pm$2.6} & 82.81 & 80.54 & \textbf{80.05$\mathbf{\pm}$2.2} & \textit{79.67$\pm$2.7} & 83.20  & 80.22 & \textbf{79.24$\mathbf{\pm}$2.7} & \textit{78.29$\pm$3.3} \\
    CNNCifar,Cifar10 & 66.75 & 63.36 & \textbf{60.85$\mathbf{\pm}$1.3} & \textit{59.49$\pm$5.2} & 63.35 & 46.61 & \textbf{56.49$\mathbf{\pm}$4.6} & \textit{46.63$\pm$3.1} & 55.54 & 50.56 & \textbf{48.97$\mathbf{\pm}$9.5} & \textit{45.62$\pm$7.2} \\
    VGG,Cifar10 & 80.61 & 78.71 & \textbf{78.14$\mathbf{\pm}$0.9} & \textit{74.55$\pm$4.7} & 79.59 & 76.19 & \textbf{73.23$\mathbf{\pm}$2.6} & \textit{69.71$\pm$4.4} & 78.26 & 73.87 & \textbf{64.94$\mathbf{\pm}$10.4} & \textit{62.95$\pm$12.6} \\
    ResNet,Cifar10 & 80.72 & 78.02 & \textbf{77.61$\mathbf{\pm}$0.9} & \textit{74.64$\pm$0.62} & 78.89 & 76.22 & \textbf{70.52$\mathbf{\pm}$6.0} & \textit{69.68$\pm$3.9} & 77.62 & 73.14 & \textbf{61.74$\mathbf{\pm}$11.2} & \textit{60.51$\pm$10.1} \\
    ResNet,Cifar100 & 41.29 & 37.22 & \textbf{31.3$\mathbf{\pm}$6.0} & \textit{27.68$\pm$5.7} & 36.79 & 28.81 & \textbf{20.19$\mathbf{\pm}$5.7} & \textit{15.80$\pm$4.5} & 36.34 & 23.08 & \textbf{15.72$\mathbf{\pm}$5.4} & \textit{13.52$\pm$6.3} \\
    \midrule
          & CFL-F$\downarrow$   & CFL-S$\downarrow$   & DeFTA$\downarrow$ & DeFL$\downarrow$  & CFL-F$\downarrow$   & CFL-S$\downarrow$   & DeFTA$\downarrow$ & DeFL$\downarrow$  & CFL-F$\downarrow$   & CFL-S$\downarrow$   & DeFTA$\downarrow$ & DeFL$\downarrow$ \\
    \midrule
    Transformer,Wikitext-2 & 1.182 & 1.185 & \textbf{1.193$\mathbf{\pm}$0.001} & \textit{1.194$\pm$0.001}      & 1.188 & 1.193 & \textbf{1.199$\mathbf{\pm}$0.001} & \textit{1.202$\pm$0.001}      & 1.19  & 1.196 & \textbf{1.201$\mathbf{\pm}$0.001} & \textit{1.203$\pm$0.001} \\
    \bottomrule
    \bottomrule
    \end{tabular}%
  }
  \label{tab:model-performance}%
\end{table*}%
\subsection{Experimental Overview}
For verification, we adopt six popular basic models in both CV and NLP and six datasets in our following experiments. In conclusion, we firstly make comparisons between different methods across different peers, and show that our proposed DeFTA achieves comparable performance with centralized FL, while other decentralized approaches fail. Second, we testify the robustness of different methods by simulating wildness where at most 66\% of workers are malicious. The result suggests that our proposed method is capable of successfully completing the model training process, while effectively detecting and separating 66\% of malicious actors from the network natively. Finally, dedicated comparisons of DeFTA and asyncDeFTA are provided, implying asyncDeFTA is as performant as DeFTA without synchronization. More experimental details will be elaborated on below.

\textbf{Models:} There were in total six models used in our experiments, which are listed in ascending order of model size: Multi-Layer Perceptron (MLP), MnistNet, CNNCifar, VGG$^R$~\cite{simonyan2014very}, ResNet$^R$~\cite{he2016deep} and Transformer~\cite{vaswani2017attention}. Note that models with a superscript $R$ represents a reduction of Dropout layers~\cite{srivastava2014dropout} and BatchNorm layers~\cite{ioffe2015batch}. This reduction is trivial and was also used by other works in FL domain~\cite{sattler2019robust}. The detailed specification of all models can be inspected in our open-sourced repository.

\textbf{Datasets:} There were in total six datasets used in our experiments, which are listed in ascending order of dataset difficulty as well: MNIST~\cite{lecun1998mnist}, FMNIST~\cite{cohen2017emnist}, EMNIST~\cite{cohen2017emnist}, Cifar10~\cite{krizhevsky2009learning}, Cifar100~\cite{krizhevsky2009learning} and Wikitext-2~\cite{merity2016pointer}. To match the non i.i.d. characteristics of FL, each dataset was intentionally partitioned into multiple non i.i.d. subsets before training. The demonstrative non i.i.d. partitioning for $8$ different workers is illustrated in Figure~\ref{fig:data-distribution}.

\textbf{Implementation Details:} Comparisons were made in a simulated FL environment where the maximum number of participating workers was $60$ (\textit{i.e.,} world size, including malicious workers). Moreover, the simulated software was coded using PyTorch RPC framework with \texttt{GLOO} communication backend provided by PyTorch 1.8.1. The reason for using \texttt{GLOO} instead of faster backends like \texttt{NCCL} was to ensure the maximum compatibility as \texttt{NCCL} currently only supports Nvidia GPUs, which most edge devices does not have. For hyper-parameters, the global training epoch $E$ was set to $100$, the worker local training epoch was set to $10$, the batch size was set to $64$, and the learning rate was set to $0.01$. These settings were default for all experiments unless otherwise stated. In addition, there is no particular peer selecting strategy applied in DeFTA for generalization, implying that peers of a given worker are randomly selected. However, in practice, the peer selecting strategy can be varied for different downstream tasks, potentially further boosting the performance of DeFTA and asyncDeFTA.

\subsection{Performance Analysis}
We first compare the final model's accuracy/PPL on the test set respectively trained by centralized FL (CFL-A and CFL-S represents FedAvg with all workers and FedAvg with sampled workers respectively, where in the former one the aggregation will involve all workers, and in the latter one it will only involve randomly sampled workers) and Decentralized FL (DeFTA, and DeFTA without proposed model aggregating formula and DTS, which is equivalent to other decentralized FL works~\cite{hu2019decentralized}) with $8$, $14$, and $20$ workers. For CFL-S, the number of sampled workers is set to $2$. For decentralized FL, the average number of peers for any worker is configured to $4$ (\textit{i.e.,} every worker averagely connected to other 4 workers), and the number of sampled workers is set to $2$ (\textit{i.e.,} each worker aggregates $2$ sampled worker's models). The final results of all experiments are shown in Table~\ref{tab:model-performance}.

From Table~\ref{tab:model-performance}, one could find that DeFTA constantly outperforms DeFL across all experimental settings, achieving a comparable model performance compared to CFL-S, and validating the effectiveness of our proposed model aggregating formula. Then, the model performance gap between DeFTA and CFL-S increases concerning the difficulty of the task, which is essentially due to the accumulation of self-training bias to the transformation in MDP, as well as the reduction of connections between workers caused by false-positive eliminations by DTS, as connections between workers are randomly determined, which potentially results in totally unrelated local datasets between connected workers. Fortunately, these issues can be effectively addressed by reducing the learning rate $\eta$ and the sparseness of $P$, or picking other peer selecting strategies and incentives correspondingly. Furthermore, the model performance gap between DeFTA and CFL-S also increases with respect to the world size. In fact, this phenomenon can be observed between CFL-F and CFL-S as well. The reason for such a performance degradation is that partitioned datasets for $20$ workers were much more non-i.i.d. than $8$ workers' (Figure~\ref{fig:data-distribution-ws}), leading to much more difficult tasks.

\begin{figure}[htb]
    \centering
    % \hfill
    \subfloat[]{%
       \includegraphics[width=0.4\linewidth]{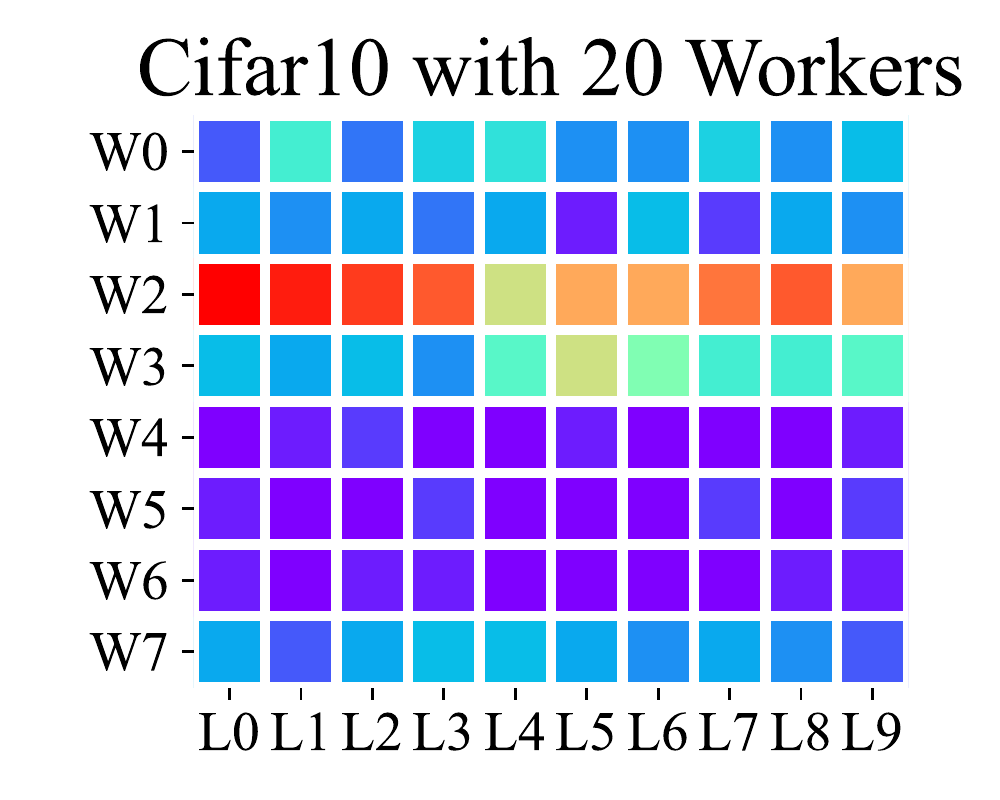}}
    % \hfill
    \subfloat[]{%
        \includegraphics[width=0.4\linewidth]{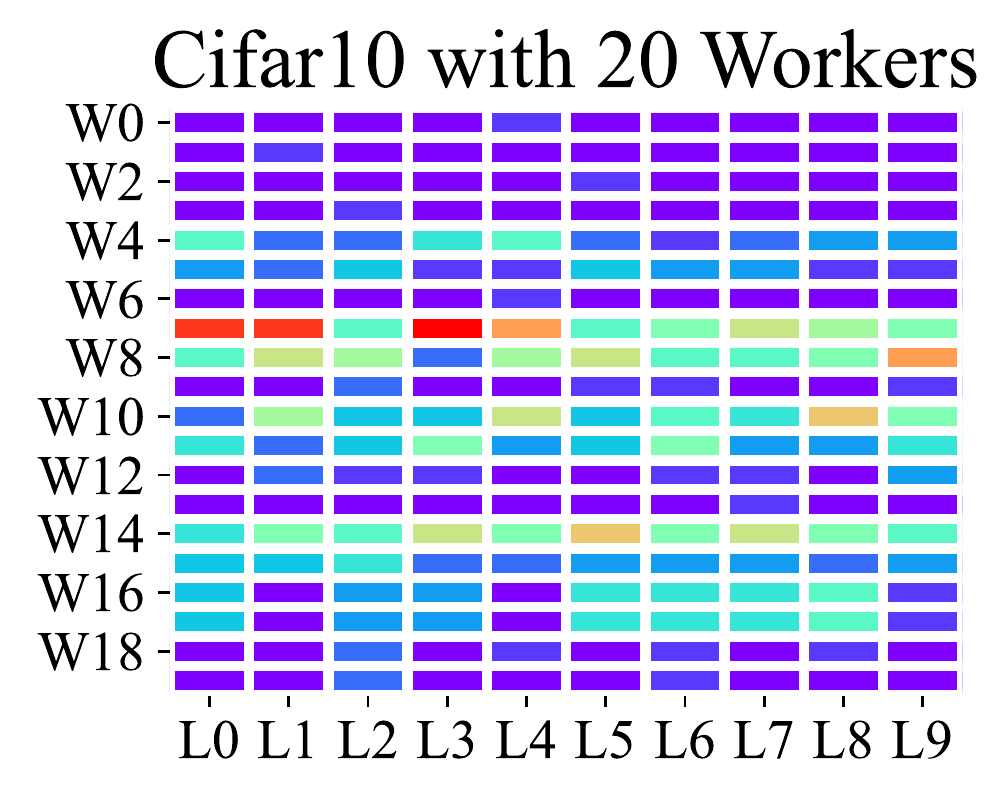}}
    % \hfill
    \caption{An illustration of non-i.i.d. partitioning of Cifar10 datasets for $8$ and $20$ workers. It is clear that their data distributions are significantly different, and the latter one is considerably more non-i.i.d. than the former one, which results in a performance degradation with the increment of world size for both FedAvg and DeFTA.}
    \label{fig:data-distribution-ws} 
\end{figure}

% Table generated by Excel2LaTeX from sheet 'DTS'
\begin{table*}[htbp]
  \centering
  \caption{Accuracy/PPL comparison between methods with $20$ vanilla workers and $k$ malicious actors in $100$ epochs}
  \resizebox{\linewidth}{!}{%
    \begin{tabular}{c|ccc|c|c|c|c|c}
    \toprule
    \multirow{2}[4]{*}{Model,Dataset} & \multicolumn{3}{c|}{k=1(4.8\%)} & k=3(14.3\%) & k=5(20\%) & k=10(33.3\%) & k=20(50\%) & k=40(66.7\%) \\
\cmidrule{2-9}          & CFL-S(\%)$\uparrow$ & DeFL(\%)$\uparrow$  & DeFTA(\%)$\uparrow$ & DeFTA(\%)$\uparrow$ & DeFTA(\%)$\uparrow$ & DeFTA(\%)$\uparrow$ & DeFTA(\%)$\uparrow$ & DeFTA(\%)$\uparrow$ \\
    \midrule
    MLP,MNIST & 11.4  & 10.0$\pm$0.8 & 96.82$\pm$0.5 & 96.94$\pm$0.6 & 96.92$\pm$0.4 & 96.94$\pm$0.4 & 95.01$\pm$2.4 & 90.95$\pm$19.2 \\
    MnistNet,FMNIST & 10.0    & 10.0$\pm$0.0 & 88.16$\pm$1.6 & 87.82$\pm$2.1 & 87.80$\pm$1.9 & 87.78$\pm$2.8  & 88.28$\pm$1.9 & 80.18$\pm$16.9 \\
    MnistNet,EMNIST & 2.1   & 2.1$\pm$0.0 & 78.77$\pm$3.1 & 78.77$\pm$3.1 & 78.04$\pm$3.2 &  76.98$\pm$2.3 & 73.93$\pm$5.3 & 68.36$\pm$16.6 \\
    CNNCifar,Cifar10 & 10.0    & 10.0$\pm$0.0 & 52.90$\pm$8.7 & 45.7$\pm$11.4 & 35.52$\pm$12.0 & 39.51$\pm$15.7 & 37.83$\pm$15.7 & 23.44$\pm$19.3 \\
    VGG,Cifar10 & 10.0    & 10.0$\pm$0.0 & 70.01$\pm$3.7 & 68.06$\pm$6.3 & 60.13$\pm$13.9 & 62.00$\pm$10.1 & 58.65$\pm$5.8 & 43.37$\pm$22.1 \\
    ResNet,Cifar10 & 10.0    & 10.0$\pm$0.0 & 66.57$\pm$6.2 & 67.00$\pm$6.4 & 64.98$\pm$9.1 & 63.11$\pm$10.9  & 64.41$\pm$7.5 & 50.85$\pm$18.3 \\
    ResNet,Cifar100 & 1.0     & 1.0$\pm$0.0 & 17.72$\pm$4.5 & 16.75$\pm$5.8 & 15.70$\pm$5.4 & 12.56$\pm$7.3 & 12.66$\pm$6.4 & 8.56$\pm$6.3 \\
    \midrule
          & CFL-S$\downarrow$ & DeFL$\downarrow$  & DeFTA$\downarrow$ & DeFTA$\downarrow$ & DeFTA$\downarrow$ & DeFTA$\downarrow$ & DeFTA$\downarrow$ & DeFTA$\downarrow$ \\
    \midrule
    Transformer,Wikitext-2 & N/A   & N/A   & 1.201$\pm$0.001 & 1.2$\pm$0.001 & 1.203$\pm$0.001 & 1.198$\pm$0.001 & 1.200$\pm$0.001 & - \\
    \bottomrule
    \end{tabular}%
  }
  \label{tab:dts}%
\end{table*}%

\begin{figure*}[h]
    \centering
    \subfloat{%
        \includegraphics[width=0.25\linewidth]{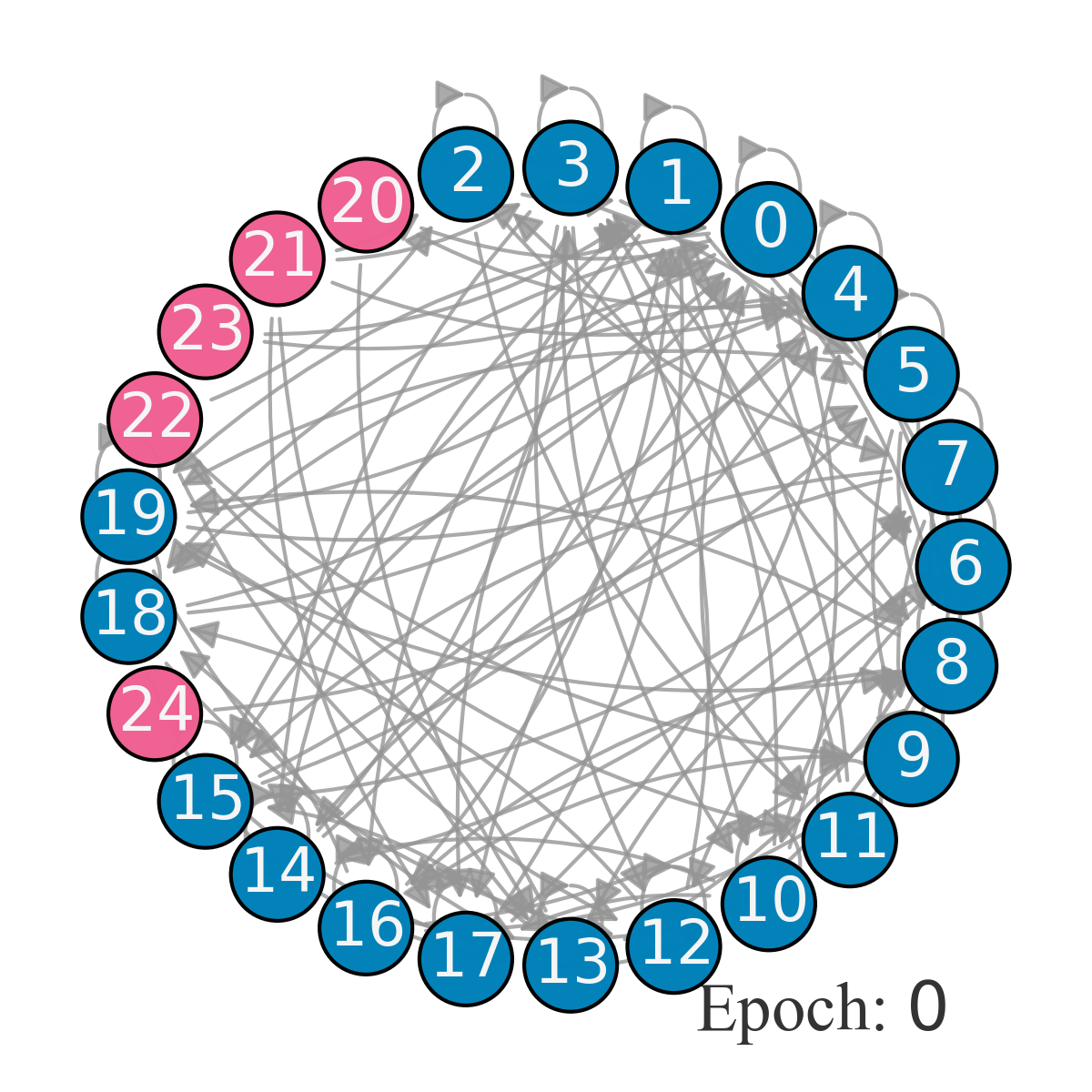}}
    \subfloat{%
        \includegraphics[width=0.25\linewidth]{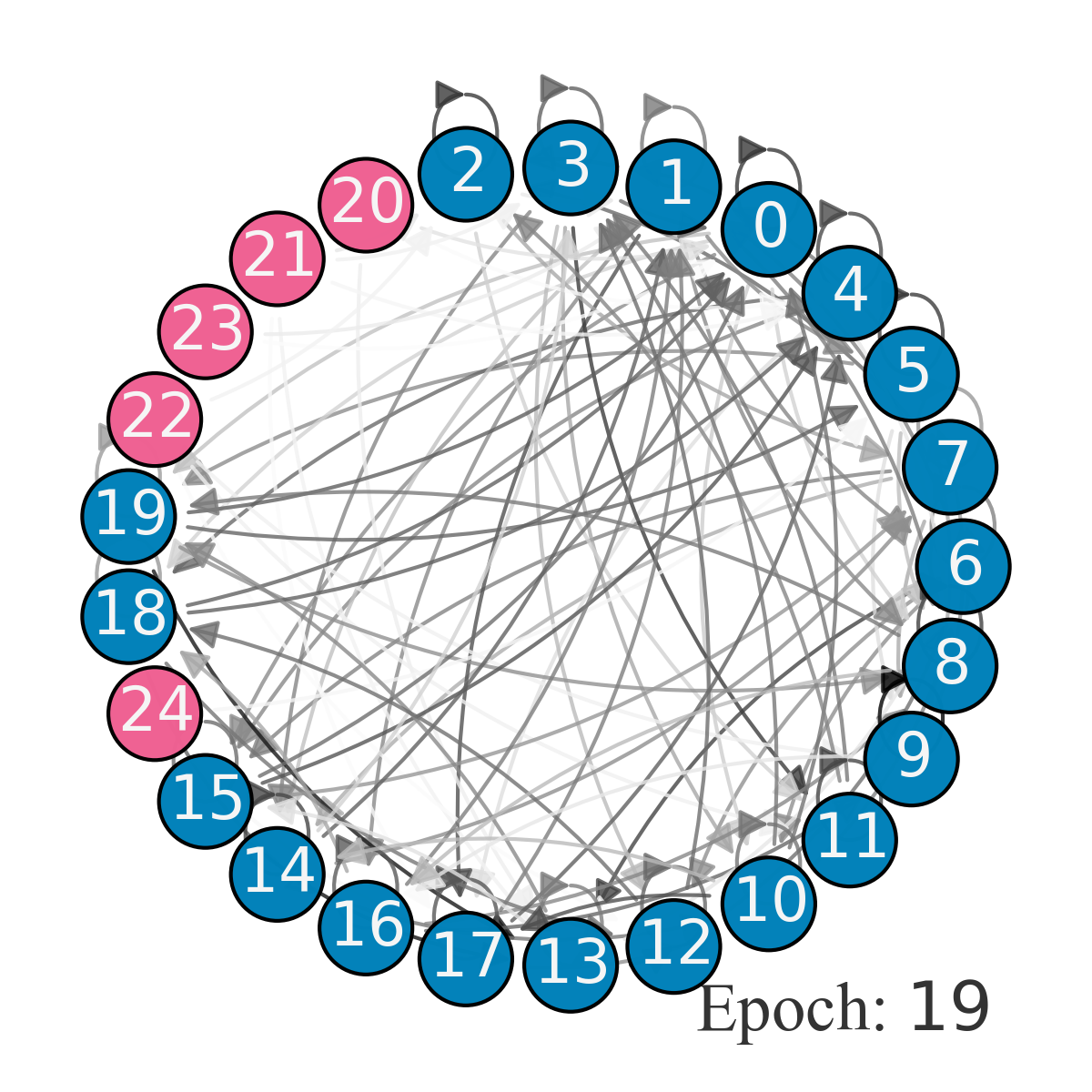}}
    \subfloat{%
        \includegraphics[width=0.25\linewidth]{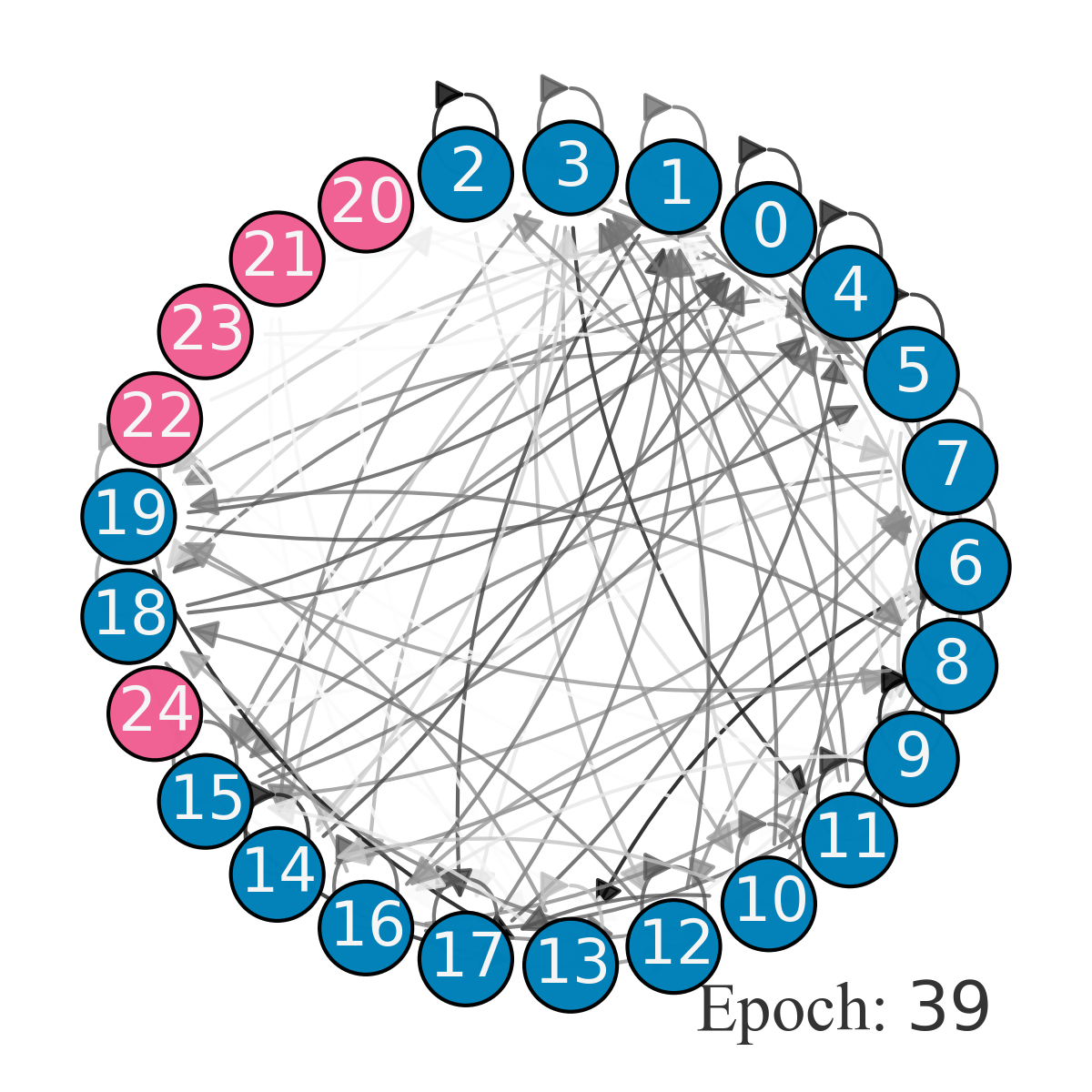}}
    % \hfill
    \\
    \subfloat{%
      \includegraphics[width=0.25\linewidth]{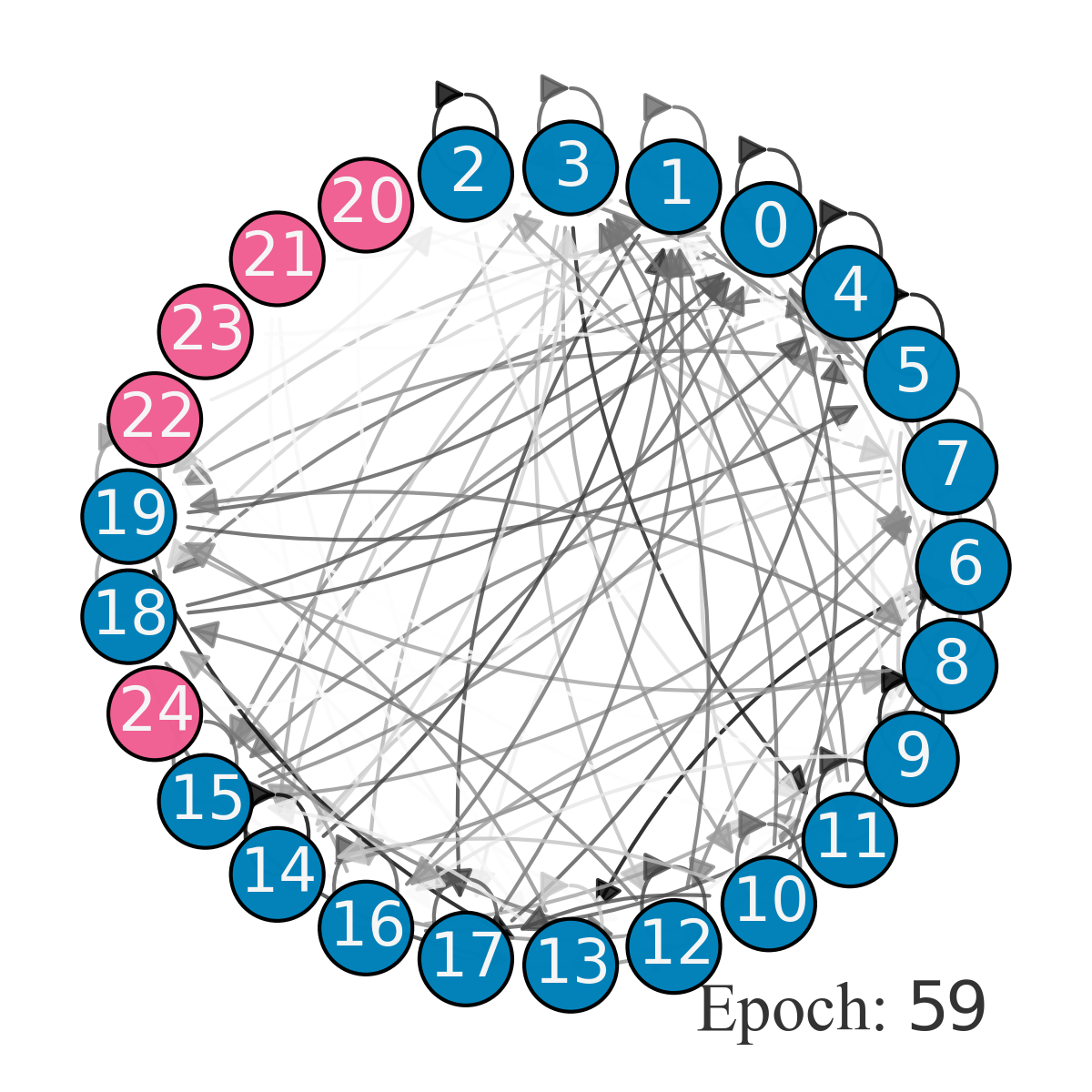}}
    % \hfill
    \subfloat{%
        \includegraphics[width=0.25\linewidth]{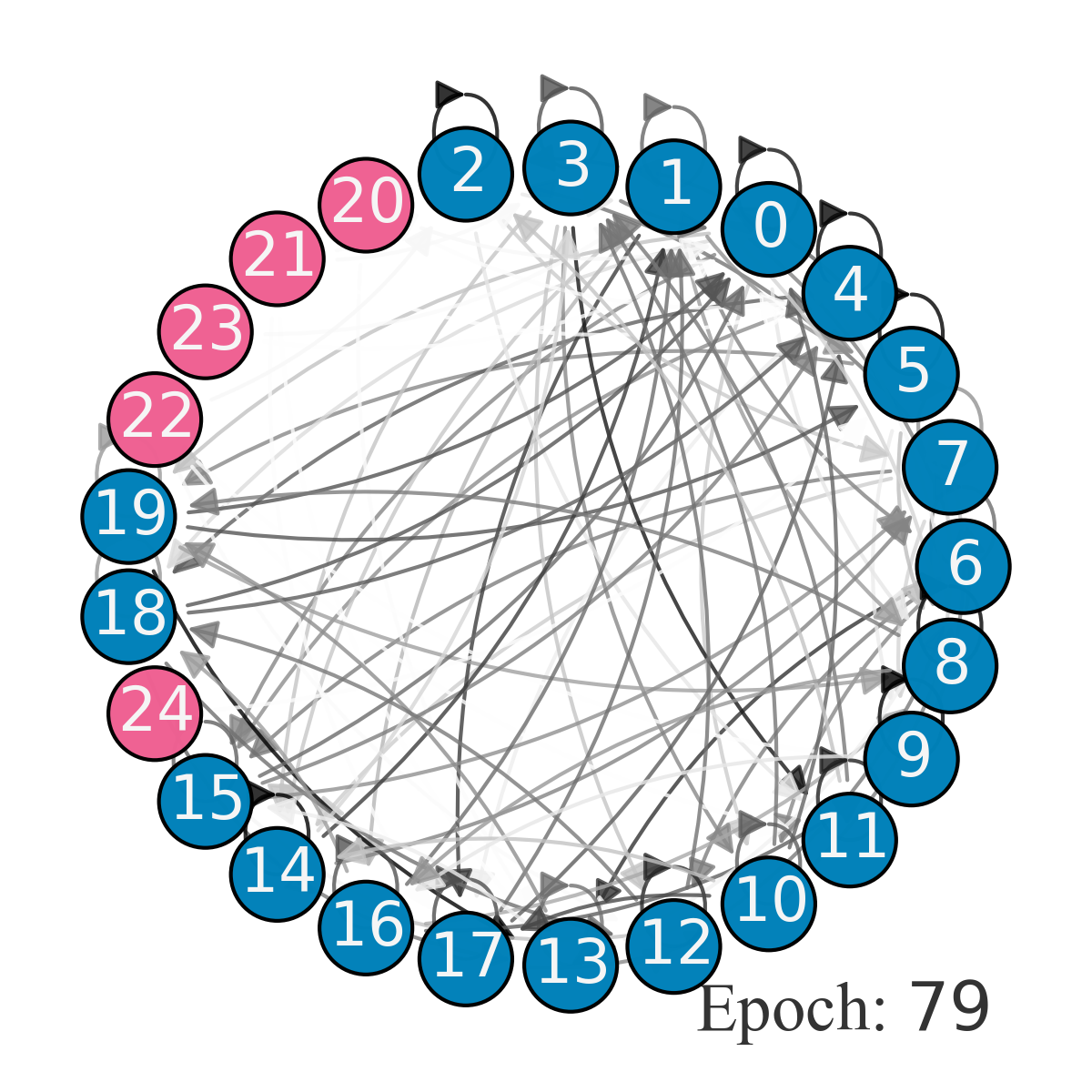}}
    % \hfill
    \subfloat{%
        \includegraphics[width=0.25\linewidth]{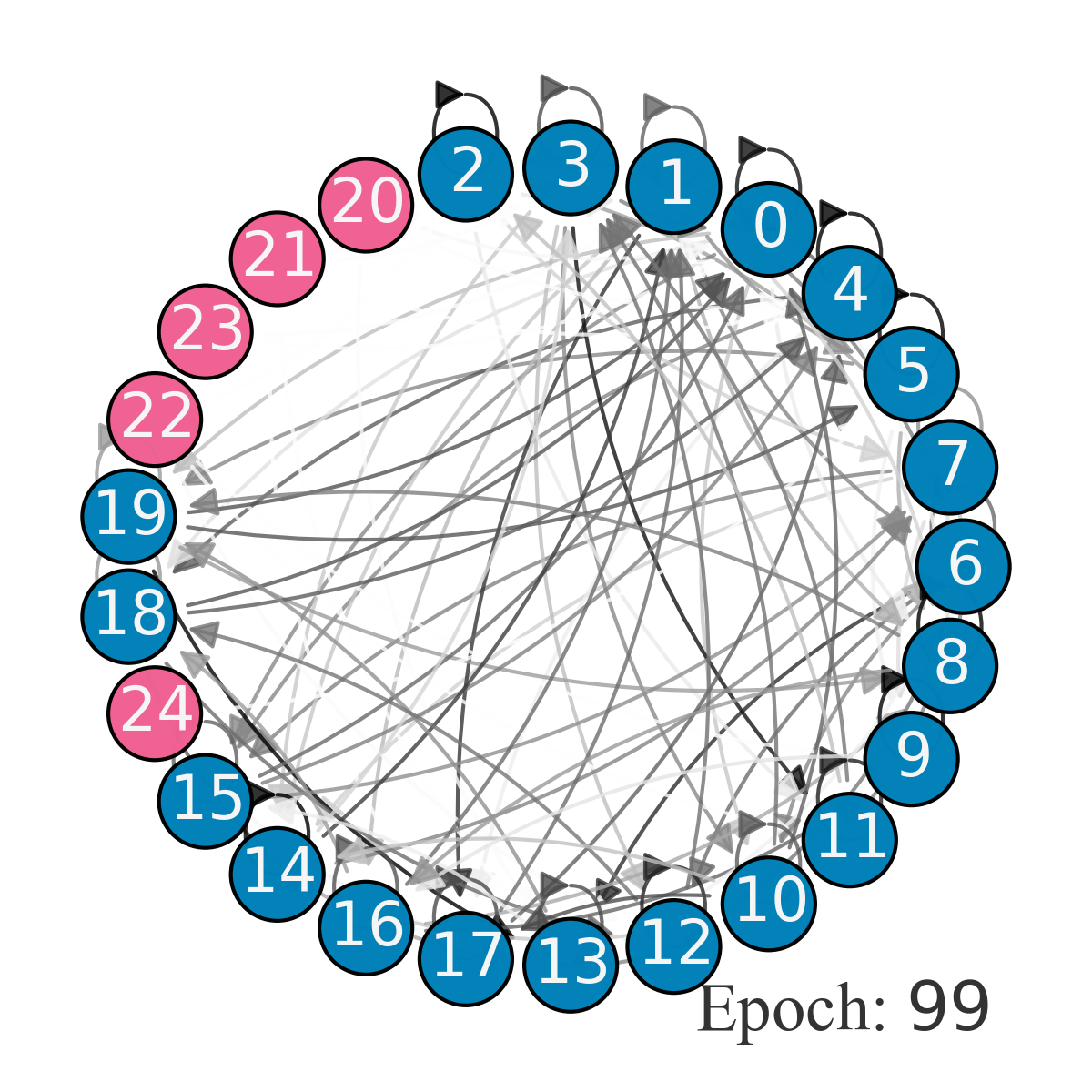}}
    \caption{An illustration of connections (marked by directed lines) between workers of MnistNet trained on EMNIST with 5 attackers (marked by pink nodes, indexed by 20 - 24) and 20 normal workers (marked by blue nodes, indexed by 0 - 19). The shade of the line (white to black) represents the confidence weight of this connection. It can be seen that initially, all lines were gray, indicating: 1. workers did not fully trust others at the beginning, and 2. all connections were weighted equally. However, at epoch $20$, it is clear that all malicious actors were already departed from the main network. After that, all connections became stable for the rest of $80$ epochs.}
    \label{fig:dts-role} 
\end{figure*}

\subsection{Robustness}
In reality, where is full of suspect peers in a p2p network that is ignored by the community, conventional centralized/decentralized FL would result in performance catastrophe. For verification, we conduct a series of controllable experiments to demonstrate that our proposed DeFTA could tolerate more than 66\% malicious peers in the whole network, whereas others methods fail. 

We make comparisons with centralized FL (\textit{i.e.,} CFL-S), decentralized FL (\textit{i.e.,} DeFTA w/o. DTS), and our proposed DeFTA on six datasets and models. Specifically, we select multiple workers to intentionally send malicious data to others in multiple tasks, where malicious data are generated simply by adding random noises to the global model. To keep the partitioned datasets and their respective data distribution unchanged, we fix the number of normal workers to $20$, and therefore introduce newly joined workers as malicious actors. The complete results are shown in Table~\ref{tab:dts}.

As Table~\ref{tab:dts} illustrated, only 1 malicious actor in the system is already capable of failing the entire model training process of both CFL-S and DeFL. Accordingly, we could argue that even CFL-F would also fail since the sampling of CFL-S would possibly ignore the malicious actors. On the contrary, even $40$ malicious actors in the system cannot disturb the training process of DeFTA, which with no doubt validates the effectiveness of DTS. Furthermore, to better understand the role of DTS, the connections between workers of MnistNet trained on EMNIST dataset with 5 malicious actors are visualized in Figure~\ref{fig:dts-role}. The reason for choosing MnistNet and EMNIST as the visualization sample is that they both have a relatively small model collocated with a considerably large dataset, which is considered the most common scenario for practical large-scale FL.

In detail, Figure~\ref{fig:dts-role} shows that initially (\textit{i.e.} epoch $0$), all directed lines between workers were gray and of the same shader. This implies that 1. each worker only partially trusted its peers at the beginning, and 2. all connections of a worker were weighted equally. However, at epoch $20$, all connections to malicious workers have already been faded away, suggesting that DTS successfully eliminates those malicious actors from the main network. After that, all connections become stable for the rest of the epochs, meaning the transition matrix $P$ in DeFTA was converged favorably.

\subsection{AsyncDeFTA}
\label{subsec:exp-asyncdefta}
Most of mentioned decentralized FL methods ignore the asynchronous runtime, whereas our proposed asyncDeFTA accounts for this. The comparison results of asyncDeFTA and DeFTA are illustrated in Table~\ref{tab:async-defta}.

From Table~\ref{tab:async-defta}, it can be observed that the final accuracy/PPL of asyncDeFTA is slightly lower/higher than DeFTA with the same setting, and its variance is also larger. However, with longer train process, \textit{i.e.,} AsyncDeDTA-L (500 epochs for AsyncDeDTA), its best model accuracy/PPL remained roughly equal to DeFTA. The reason for this situation is that in asyncDeFTA, there are few workers that trained so fast that they have already finished training while other workers are still at early epochs. Consequently, these fast workers can only aggregate models from their peers that were highly immature, resulting in a lower model performance when they finished training. However, slower workers can aggregate mature models from fast workers so their models are not affected after training. Therefore, we argue that if fast workers continue to train, the accuracy/perplexity of asyncDeFTA should gradually approach DeFTA, and eventually become the same when all workers' local models converge to their respective local minimal.

% Table generated by Excel2LaTeX from sheet 'Asynchronous DAFTA'
\begin{table}[htbp]
  \centering
  \caption{Accuracy/PPL comparsion of synchronous DeFTA and asynchronous DeFTA with $20$ workers}
  \resizebox{\columnwidth}{!}{%
    \begin{tabular}{cccc}
    \toprule
    Model,Dataset & DeFTA(\%)$\uparrow$ & AsyncDeFTA(\%)$\uparrow$ & AsyncDeFTA-L(\%)$\uparrow$ \\
    \midrule
    MLP,MNIST & \textit{97.19$\pm$0.5} & 94.05$\pm$2.3 & \textbf{97.14$\mathbf{\pm}$0.7} \\
    MnistNet,FMNIST & \textit{88.28$\pm$1.9} & 86.17$\pm$4.3 & \textbf{88.96$\mathbf{\pm}$1.3} \\
    MnistNet,EMNIST & \textit{79.24$\pm$2.7} & 73.83$\pm$6.1 & \textbf{79.77$\mathbf{\pm}$2.1} \\
    CNNCifar,Cifar10 & \textit{48.97$\pm$9.5} & 29.41$\pm$12.7 & \textbf{48.13$\mathbf{\pm}$15.1} \\
    VGG,Cifar10 & \textit{64.94$\pm$10.4} & 53.17$\pm$12.9 & \textbf{64.71$\mathbf{\pm}$10.5} \\
    ResNet,Cifar10 & \textit{61.74$\pm$11.2} & 38.9$\pm$16.3 & \textbf{61.53$\mathbf{\pm}$8.1} \\
    ResNet,Cifar100 & \textit{15.72$\pm$5.4} & 8.13$\pm$5.7 & \textbf{15.67$\mathbf{\pm}$5.4} \\
    \midrule
          & DeFTA$\downarrow$ & AsyncDeFTA$\downarrow$ & AsyncDeFTA-L$\downarrow$ \\
    \midrule
    Transformer,Wikitext-2 & \textit{1.201$\pm$0.001} & 1.226$\pm$0.007 & \textbf{1.204$\mathbf{\pm}$0.004} \\
    \bottomrule
    \end{tabular}%
  }
  \label{tab:async-defta}%
\end{table}%

\section{Discussion}
\label{sec:discussion}
\subsection{Privacy of DeFTA}
Recalling the works \cite{zhu2020deep, zhao2020idlg} that reconstruct the original raw training dataset from intermediate data during the model training process, it is clearly possible for the central server to implement this type of attack in the original centralized FL, leading to a potential single-point-of-failure problem. Hence, given that workers in DeFTA are connected and exchange models directly with their peers, a concern naturally arises: Is this type of attack also possible in DeFTA? The answer is \textit{no}. In order to implement such an attack, the attacker needs to have access to both the victim's model before and after the training. However, in DeFTA, each worker's model before its training is randomly aggregated using its local confidence scores and weights, which are unknown to other workers. Hence, the victim's model before the training cannot be retrieved by the attacker. Furthermore, with asyncDeFTA, each worker is capable of having its own training strategy, making the attack extremely hard to implement.

\subsection{Bootstrapping}
Bootstrapping helps the p2p network evolve by introducing newly joined workers to others, and is crucial for any p2p application. In our case, DeFTA can mainly have two options for bootstrapping the network:
\begin{enumerate}
    \item dedicated bootstrapping servers can be open-sourced and maintained by the community. Each of them provides information (\textit{i.e., } IP address, port, worker status, etc.) about every worker in the network.
    \item dedicated bootstrapping servers can be open-sourced and maintained by the community. Each of them only offers information about a few workers and is regarded as a network entry point. Workers in the network, On the other hand, collaborate to maintain a distributed hash table that holds information about every worker in the network.
\end{enumerate}
In fact, connections in DeFTA do not have to be real. For example, one blogger and his followers can be considered connected.

\subsection{Obtain the Global Model}
In centralized FL, the global model can be easily obtained from the central server during the training process. However, in DeFTA, since the system is fully decentralized, acquiring a global model is a little more complicated, commonly requiring the following steps: 1. connecting to $n_k$ peers who have never been connected before, 2. aggregating peers' models $w_k$ and saving it locally, 3. repeat $m$ times, where $m$ varies depending on demands. After sampling, the averaged model of all sampled models $\sum_{k}^{m} \frac{n_k}{\sum n} w_k$ represents the stable global model in DeFTA.

\subsection{Limitation of current DTS}
By assuming workers' peers are not reliable by default, DTS effectively enables robust decentralized FL with large numbers of malicious actors. However, such trustless conduct of DTS will unfortunately also cut off connections between normal workers if their datasets are not very relevant (i.e., their data distributions are notably different). Consequently, for DeFTA to operate optimally, a peer selection strategy that selects workers with similar local dataset features as peers is required. Fortunately, in practice, similar peers can generally be found by either prior knowledge or exhaustive trial (\textit{i.e.,} if the connection is cut off, connect to another one).

\subsection{Perpetual Training}
In practice, DeFTA and asyncDeFTA, like other p2p systems, can inherently support perpetual training, meaning the training process can run indefinitely. This would effectively transfer the control of models from the central server to workers, as well as provide other benefits like better ownership. Furthermore, It also assures that fast workers in asyncDeFTA do not suffer from information disadvantages (\textit{i.e.,} receiving immature models~\ref{subsec:exp-asyncdefta}). However, continuous model training can consume extensive computational and battery resources, especially for edge devices. To reduce such consumption, each worker can configure the frequency of model training respectively (\textit{i.e.,} one model training per 24 hours).

\section{Conclusions and future work}
\label{sec:conclusions}
In this paper, a comprehensive and self-contained decentralized FL framework, named DeFTA, is proposed to address various fundamental problems in both centralized FL and decentralized FL. It is intended to be a drop-in replacement for centralized FL, instantly bringing better security, scalability, and fault-tolerance to downstream applications while maintaining maximum compatibility for existing algorithms for centralized FL. Specifically, peer-to-peer architecture is firstly adopted to achieve decentralization and thereby mitigating the single-point-of-failure problem in centralized FL. Next, the outdegrees of workers are taken into account in the model aggregation formula, with a theoretical performance analysis ensuring a comparable model performance with centralized FL. Moreover, by taking advantage of the p2p topology, DTS is proposed to pessimistically assist the system in quickly and accurately identifying and eliminating malicious actors in the network, hence contributing to DeFTA's robustness and reliability. Finally, an asynchronous variant of DeFTA is proposed for practical usage. Extensive experiments on six datasets and six models validate that both asyncDeFTA and DeFTA are capable of achieving high performance while keeping the system decentralized, resilient, and trustless. However, there are still certain weaknesses to be aware of: 1. the convergence rate of asyncDeFTA is not analyzed in this paper due to the intricacy of problem modeling. 2. The performance gap between DeFTA and FedAvg has not yet been closed to its full potential. 3. The incentive rules in current DTS are rather straightforward. Solving these problems will be our next step.

% if have a single appendix:
%\appendix[Proof of the Zonklar Equations]
% or
%\appendix  % for no appendix heading
% do not use \section anymore after \appendix, only \section*
% is possibly needed

% use appendices with more than one appendix
% then use \section to start each appendix
% you must declare a \section before using any
% \subsection or using \label (\appendices by itself
% starts a section numbered zero.)
%

% \appendices
% \section{Proof of the First Zonklar Equation}
% Appendix one text goes here.

% % you can choose not to have a title for an appendix
% % if you want by leaving the argument blank
% \section{}
% Appendix two text goes here.

% % use section* for acknowledgment
% \section*{Acknowledgment}

% The authors would like to thank...

% Can use something like this to put references on a page
% by themselves when using endfloat and the captionsoff option.
\ifCLASSOPTIONcaptionsoff
  \newpage
\fi

\vfill

% Can be used to pull up biographies so that the bottom of the last one
% is flush with the other column.
%\enlargethispage{-5in}

% that's all folks
\end{document}